\providecommand{\tabularnewline}{\\}
\theoremstyle{plain}
\newtheorem{thm}{\protect\theoremname}
\theoremstyle{definition}
\newtheorem{pred}{Prediction}
\theoremstyle{theorem}
\newtheorem{result}{Result}
\providecommand{\theoremname}{Theorem}
\begin{document}
\title{Revealing Choice Bracketing}
\author{Andrew Ellis and David J. Freeman}
\date{March 2024}
\thanks{Freeman thanks UCSB and QMUL for hosting him while parts of this paper
were written, and UToronto for use of its TEEL lab for our experiments.
We especially thank Johannes Hoelzemann for help in coordinating experiments
at TEEL. We thank Pietro Ortoleva, Stefano DellaVigna, three anonymous
referees, Ted Bergstrom, Ignacio Esponda, Erik Eyster, Daniel Gottlieb,
Yoram Halevy, Alex Imas, Matt Levy, Marco Mariotti, Antony Millner,
Ryan Oprea, Luba Petersen, Kate Smith, Balazs Szentes, Emanuel Vespa,
Sevgi Yuksel, Lanny Zrill, and audiences at UCSB, Berlin, Chapman,
QMUL, UCL, UTS, Berkeley/Booth, UCSD, Ottawa, UC Davis, D-TEA 2020,
ESA 2020, RUD 2020, ESNAWM 2021, MBEES 2021, CEA 2021, BRIC 2022 for
helpful discussions and suggestions. We thank Alex Ballyk, Daphne
Baldassari, Matheus Thompson Bandeira, Priscilla Fisher, En Hua Hu,
Louise Song, Johnathan Wang, and Marie Zagre for research assistance.
This paper was funded by a 2018 SFU/SSHRC Institutional Grant and
SSHRC Insight Grant 435-2019-0658 and was conducted under SFU Research
Ethics Study \#2016s0380.}
\begin{abstract}
Experiments suggest that people fail to take into account interdependencies
between their choices -- they do not broadly bracket. Researchers
often instead assume that people narrowly bracket, but existing designs
do not test it. We design a novel experiment and revealed preference
tests for how someone brackets their choices. In portfolio allocation
under risk, social allocation, and induced-value shopping experiments,
40-43\% of subjects are consistent with narrow bracketing and 0-16\%
with broad bracketing. Adjusting for each model's predictive precision,
74\% of subjects are best described by narrow bracketing, 13\% by
broad bracketing, and 6\% by intermediate cases.

JEL codes: D01, D90

Keywords: choice bracketing, individual decision-making, revealed
preference, experiment.
\end{abstract}

\thanks{Ellis: Department of Economics, London School of Economics. e-mail:
\href{mailto:a.ellis@lse.ac.uk}{a.ellis@lse.ac.uk}. Web: \url{http://personal.lse.ac.uk/ellisa1/}}
\thanks{Freeman: Department of Economics, Simon Fraser University. e-mail:
\href{mailto:david_freeman@sfu.ca}{david\_freeman@sfu.ca}. Web: \url{http://www.sfu.ca/~dfa19/}}

\maketitle
Individuals face many interconnected decisions. How an individual
takes into account the interdependencies when choosing, or how they
\emph{bracket} these choices, significantly influences their decision-making
process. Bracketing determines which outcomes are evaluated as gains
or losses and fair or unfair, and also plays a role in measuring parameters
like risk aversion. Nearly every behavioral model and most ``rational''
ones require some assumption about how people bracket choices.

There are many ways to bracket. Optimal decision-making requires that
people \emph{broadly bracket}, considering every feasible combination
of choices and selecting the best. The most common alternative is
that people \emph{narrowly bracket} by making each decision without
considering any interdependencies. However, these two extremes are
far from exhaustive. For instance, \citet{barberis2006individual}
and \citet{rabin2009narrow} propose a hybrid of the two called \emph{partial-narrow
bracketing}.

Most experimental evidence interpreted as being \emph{for} narrow
bracketing is actually evidence \emph{against} broad bracketing. This
evidence, surveyed in Section \ref{sec:Evidence}, comes mainly from
studies that follow a similar design to \citet[Problem 3]{tversky1981framing}
or \citet[Problems 11-12]{kahneman1979}. In the former, each subject
makes two concurrent choices, and one pair of choices generates a
distribution over outcomes dominated by another pair. They find that
many subjects choose the dominated pair. In the latter, two groups
of subjects face choices between lotteries that are economically identical
but differ in how payments are divided between an endowed income and
an active choice. The two groups make different choices. Both designs
provide evidence against broad bracketing. However, narrow bracketing
makes no testable predictions in either design: any choices are consistent
with it. This leaves open the question of whether narrow bracketing
is a good description of behavior.

We propose a theoretical framework and experimental design to test
how an individual brackets. Every subject makes several decisions,
each comprised of one or more parts. Their payoff from a decision
is determined by the sum of the items chosen in its parts. A subject
is consistent with narrow bracketing if they maximize a preference
relation in each part, and consistent with broad bracketing if they
optimize after integrating all parts of the decision into a single
feasible set. We characterize all the testable predictions for narrow,
broad, and partial-narrow bracketing. For example, we check for narrow
bracketing by performing a standard revealed preference test on a
dataset of their choices that treats each part as an independent observation.
The results provide individual-level, non-parametric tests that rely
only on monotonicity of the underlying preferences. They provide the
theoretical basis for an experimental design that tests all three
models of choice bracketing.

To implement these tests, we conduct experiments in which subjects
make five decisions consisting of one or two parts. Every part is
a budget set, and integrating the choices only requires addition.
Unlike the existing designs described above, narrow and broad bracketing
make distinct predictions in each of the two-part decisions. Every
good in the first part is a perfect substitute for a good in the second.
Broad bracketers recognize and take full advantage of this substitutability,
while narrow bracketers do not.

We apply our design and tests to portfolio choices among risky assets
\citep{choi2007consistency}, dividing money between two anonymous
other subjects \citep{andreoni2002giving,fisman2007individual}, and
standard consumer problems with induced values over bundles. The last
allows us to conduct even more powerful tests of bracketing because
preferences are known. Across the three experiments, 40-43\% of subjects
are consistent with narrow bracketing, 0-16\% are consistent with
broad, 1-32\% are consistent with partial-narrow but neither broad
nor narrow, and 24\%-44\% are consistent with none of the three. No
subjects are consistent with \emph{both} broad and narrow bracketing.
We then determine which model best describes each subject's bracketing
based on both the number of errors they make relative to each model
and the predictive precision of that model \citep{selten1991properties}.
We classify 6-13\% of subjects to broad bracketing, 68-78\% to narrow
bracketing, and only 3-9\% to partial narrow bracketing, with variation
across experiments.\footnote{The remaining subjects are poorly described by all models and left
unclassified.}

Failing to bracket broadly has consequences for welfare. Economists
as early as \citeauthor{smith1776inquiry} (1776) argue that agents
benefit from specialization. A broad bracketer specializes by purchasing
as much of a good as possible when it is relatively cheap before purchasing
any when it has a higher opportunity cost. This increases utility
in the same way that comparative advantage increases total productivity
in classic trade models.\footnote{\citet{baron2004support} provide a survey measure of understanding
of specialization, and show it is correlated with attitudes to free
trade.} Broad bracketers apply Smith's insights about the productivity increases
from specialization to individual decision-making, while narrow bracketers
fail to reap the gains from specialization.\footnote{We thank a referee for pointing to this analogy.}
Our approach allows us to quantify the magnitude of these losses.
For instance, the subjects classified as narrow bracketers made an
average of \$1.29 less than the broad bracketers from the two-part
decisions in the Shopping Experiment, more than 10\% of the variable
payment.

Because our tests are individual-level, they can provide evidence
on why so many people bracket narrowly despite the gains from broad
bracketing. In online follow-up experiments, we recorded choice process
data and implemented a nudge to encourage subjects to ``examine both
parts'' of the decision. While the nudge increased the proportion
of subjects who looked at both parts, it had a limited effect on the
rate of broad bracketing. In one arm of the Online Risk Experiment,
we observed that about a quarter of the narrow bracketers had enough
information to bracket broadly but did not do so. In the Online Shopping
Experiment, a similar fraction used a calculator to compute the payoff
of one or more bundles that were only feasible at the decision level
and not in any part on its own. Both these behaviors indicate some
consideration of how choices from both parts combine to determine
payoffs. This suggests that not all narrow bracketing results from
simply looking at one part of a decision at a time and making an optimal
choice for that part on its own. Instead, it suggests that some individuals
consider both decision problems, yet ultimately implement choices
that are only compatible with narrow bracketing.

Bracketing also affects how one interprets behavior. A narrow bracketer's
reluctance to take a small yet actuarially favorable gamble indicates
only slight risk aversion. However, for broad bracketers, the rejection
of the same gamble signifies extreme risk aversion, given its minimal
impact on their overall wealth \citep{rabin2000risk}. Our approach
enables us to separate underlying preferences from bracketing, with
implications for measuring risk or inequity aversion.

This separation allows us to measure heterogeneity in bracketing directly
and to account for it when inferring preferences. To illustrate the
importance of doing so, consider the Social Experiment. In Part 2
of Decision 1 (see Table \ref{tab:design}), subjects on average allocate
unequally between the two anonymous others, dividing \$16 into \$6.80
and \$9.20. If all subjects bracketed narrowly, this would suggest
a lack of concern for equity. However, we classify 10\% of subjects
as broad bracketers and 75\% as narrow bracketers. The narrow bracketers
on average allocate \$7.75 and \$8.25 to each person, and the broad
bracketers all achieve perfectly equal allocations in the decision
by allocating a \$14 - \$2 split in that part and a \$0 - \$12 split
in the other. After taking bracketing into account, all these subjects
are consistent with inequity aversion. The only existing work that
estimates heterogeneity in bracketing, \citet{rabin2009narrow}, estimates
a between-subjects, parametric, structural model and assumes that
all subjects have the same preferences. In contrast, our approach
is individual-level, non-parametric, and allows heterogeneity in preferences.

Most applications of prospect theory explicitly assume narrow bracketing
(\citealt{camerer2004prospect}, Table 5.1; \citeauthor{o2018reference},
2018, Sections 3.4 and 4.5). Since it is impossible to avoid all other
risks, a broad bracketer will not exhibit noticeable small-stakes
loss aversion \citep{barberis2006individual}. At the same time, other
applications of prospect theory, such as casino gambling \citep{barberis2012model,ebert2015until},
require broader bracketing. Our results suggesting heterogeneity in
bracketing may reconcile the results rejecting broad bracketing (e.g.
\citealp{tversky1981framing}) and those more supportive of it (e.g.
\citealp{heimer2020dynamic} and \citealp{baillon2022randomize}).

Choice bracketing is also important for social decisions. For example,
\citeauthor{sobel2005interdependent} (\citeyear[p. 400]{sobel2005interdependent})
remarks that in experiments studying social decisions, a subject should
``maximize her monetary payoff in the laboratory and then redistribute
her earnings to deserving people later... if the concern for inequity
was `broadly bracketed.'{}'' However, the literature he surveys and
our analysis both find evidence consistent with subjects caring about
narrowly-bracketed equity (e.g. \citealp{charness2002understanding,bolton2000erc}).

Standard economic analyses make assumptions about bracketing as well.
As noted by a recent behavioral economics textbook, ``Although choice
bracketing has been largely ignored in economics, models in economic
theory often implicitly assume and invoke it'' (\citealp{dhami2016foundations},
p. 1469). Many analyses exclude some related choices from the model
yet interpret results in terms of a fully rational choice. Our results
suggesting that most people are narrow bracketers lend support to
this approach, with the caveat that some seem best modeled as broad
bracketers.

\section{Testing choice bracketing\label{sec:Theory}}

We consider a decision-maker (DM) who faces $T\geq1$ decision problems
involving alternatives contained in $\mathbb{R}_{+}^{n}$. Decision
$t$ consists of $K_{t}\geq1$ parts. Formally, the feasible set of
part $k$ from decision $t$ is $B^{t,k}$, assumed to be compact
and non-empty. The DM chooses one alternative, denoted $x^{t,k}$,
from each part, $B^{t,k}$. Thus, the analyst observes the \emph{dataset}
\[
\mathcal{D}=\left\{ \left(x^{t,k},B^{t,k}\right)\right\} _{(t,k)}
\]
indexed by decisions and parts and where $x^{t,k}\in B^{t,k}$. They
face parts concurrently, and their payoff in decision $t$ depends
only on the sum of their choices across parts of the decision, 
\[
x^{t}=\sum_{k=1}^{K_{t}}x^{t,k},
\]
which we call the ``final alternative.'' A real world analogue consists
of a scanner dataset with purchases at different stores (parts, indexed
by $k$) in a given time period (a decision, indexed by $t$), and
the shopper consumes these purchases after buying at all stores but
before the next period.

A DM \emph{broadly brackets} if they maximize an increasing utility
function over the set of feasible final alternatives for each decision
problem,
\[
B^{t}=\left\{ \sum_{k=1}^{K_{t}}y^{t,k}:y^{t,k}\in B^{t,k}\right\} .
\]
In contrast, they \emph{narrowly bracket} if they choose the alternative
in part $k$ of decision $t$ that maximizes an increasing utility
function over $B^{t,k}$.\footnote{A narrow bracketer acts as if they perceive alternatives correctly
and maximizes a well-behaved preference over them, but misperceives
the budget set. In contrast, a DM who misperceived correlation, e.g.
\citet{eyster2016correlation} or \citet{EllisPiccione2017}, perceives
the choice set correctly but misperceives the alternatives themselves.} Both require that the DM is ``rational'' in that they maximize
some well-behaved preference relation, but neither requires any further
assumptions about preferences beyond monotonicity. A narrow bracketer
optimizes part-by-part, while a broad bracketer does so decision-by-decision.

Our goal is to test which, if any, models of bracketing can explain
a subject's choices. Formally, a dataset $\mathcal{D}$ is \emph{rationalized
by broad bracketing} if there exists an increasing utility function
$u:\mathbb{R}_{+}^{n}\rightarrow\mathbb{R}$ so that
\[
x^{t}=\arg\max_{x\in B^{t}}u\left(x\right)
\]
for every decision $t$, and a dataset $\mathcal{D}$ is \emph{rationalized
by narrow bracketing} if there exists an increasing $u:\mathbb{R}_{+}^{n}\rightarrow\mathbb{R}$
so that 
\[
x^{t,k}=\arg\max_{x\in B^{t,k}}u\left(x\right)
\]
 for every part $k$ and decision $t$.\footnote{These definitions assume that each choice represents strict preference,
but one can easily generalize them and the results that follow to
allow for indifference.} The next two subsections provide necessary and sufficient conditions
for $\mathcal{D}$ to be rationalized by either of the two.

\subsection{Predictions}

We first provide necessary conditions for rationalization by broad
and narrow bracketing. The first two predictions build on the observation
that the DM's choice reveals their preference among the set of alternatives
over which they optimize. The predictions combine the Weak Axiom of
Revealed Preference (WARP) with a form of bracketing.

\begin{pred}[NB-WARP]\label{pred: first}Suppose that $\mathcal{D}$
is rationalized by narrow bracketing.\\
 If $\left(x^{t,k},B^{t,k}\right),\left(x^{t^{\prime},k^{\prime}},B^{t^{\prime},k^{\prime}}\right)\in\mathcal{D}$
and $x^{t^{\prime},k^{\prime}}\in B^{t,k}\subseteq B^{t^{\prime},k^{\prime}}$,
then $x^{t,k}=x^{t^{\prime},k^{\prime}}$.

\end{pred}

\begin{pred}[BB-WARP]Suppose that $\mathcal{D}$ is rationalized
by broad bracketing.\\
 For decisions $t,t^{\prime}$ in $\mathcal{D}$, if $x^{t^{\prime}}\in B^{t}\subseteq B^{t^{\prime}}$,
then $x^{t}=x^{t^{\prime}}$.

\end{pred}

Narrow bracketing requires that WARP holds when comparing any pair
of parts of decisions, even when they belong to economically different
decisions. Broad bracketing implies that WARP holds at the decision
level, comparing final alternatives that are feasible in both aggregate
budget sets.

The next prediction reflects the appropriate manifestation of monotonicity
in our setting.

\begin{pred}[BB-Mon]\label{pred:last}Suppose that $\mathcal{D}$
is rationalized by broad bracketing.\\
For any decision $t$ in $\mathcal{D}$ and any $y\in B^{t}$, if
$y\geq x^{t}$, then $y=x^{t}$.

\end{pred}

BB-Mon requires that the subject chooses on the frontier of their
aggregate budget set in a given decision.\footnote{Narrow bracketing predicts a similar condition at the part level.
Our experimental implementation forces it to hold, so we do not formally
include it here.} 

\subsection{Characterization}

The above predictions are necessary but not sufficient conditions
for each type of bracketing. We obtain tight characterizations by
extending the logic of the above to include indirect implications,
in the same manner that Strong Axiom of Revealed Preference (SARP)
extends WARP. Theorem \ref{thm:The-following-are} shows that these
are necessary and sufficient conditions for a given dataset to be
rationalized by a particular form of bracketing.

Our tests are based on applying SARP to an ancillary dataset. Say
that a bundle $x$ is \emph{directly revealed preferred} \emph{to
}$y$ in a dataset $\mathcal{D}^{\prime}$ consisting of single-part
decisions, written $xP^{\mathcal{D}^{\prime}}y$, if either $x\geq y$
and $x\neq y$ or there exists $\left(x,B\right)\in\mathcal{\mathcal{D}^{\prime}}$
so that $y\in B\backslash\{x\}$. A dataset $\mathcal{D}^{\prime}$
\emph{satisfies SARP} if the binary relation $P^{\mathcal{\mathcal{D}^{\prime}}}$
is acyclic.

We say that say that $\mathcal{D}$ satisfies BB-SARP if the ancillary
dataset 
\[
\mathcal{D}^{BB}=\left\{ \left(x^{t},B^{t}\right)\right\} _{t\in T}.
\]
satisfies SARP. Similarly, we say that say that $\mathcal{D}$ satisfies
NB-SARP if SARP is satisfied by the ancillary dataset 
\[
\mathcal{D}^{NB}=\left\{ \left(x^{s},B^{s}\right)\right\} _{s=1}^{\sum_{t=1}^{T}K_{t}}
\]
where for each part $(t,k)$, there exists a unique $s$ so that
$\left(x^{s},B^{s}\right)=\left(x^{t,k},B^{t,k}\right)$. In $\mathcal{D}^{NB}$,
each part of every decision is treated as a separate, independent
observation, whereas each decision is in $\mathcal{D}^{BB}$.
\begin{thm}
\label{thm:The-following-are}The following are true:\\
(i) The dataset $\mathcal{D}$ satisfies BB-SARP if and only if $\mathcal{D}$
is rationalizable by broad bracketing, and\\
(ii) The dataset $\mathcal{D}$ satisfies NB-SARP if and only if $\mathcal{D}$
is rationalizable by narrow bracketing.
\end{thm}
Theorem \ref{thm:The-following-are} characterizes the complete testable
implications of broad and narrow bracketing. Moreover, both conditions
are readily applied using standard computational tools.\footnote{We note that $B^{t}$ typically has a piece-wise linear budget-line
even if each $B^{t,k}$ is a standard budget set.} One can make tighter predictions by imposing more structure on the
utility function that rationalizes the data. For instance, we impose
that preferences are symmetric when we apply these tests to our experiments.
This enables stronger versions of the two tests.\footnote{Specifically in $\mathbb{R}_{+}^{2}$, if $\left(x,y\right)P^{\mathcal{D}}\left(x^{\prime},y^{\prime}\right)$,
then $\left(y,x\right)P^{\mathcal{D}}\left(x^{\prime},y^{\prime}\right)$,
$\left(x,y\right)P^{\mathcal{D}}\left(y^{\prime},x^{\prime}\right)$,
and $\left(y,x\right)P^{\mathcal{D}}\left(y^{\prime},x^{\prime}\right)$.
In Appendix \ref{sec:Visualizations}, we derive some immediate implications
of symmetry.}

\subsection{Partial-narrow bracketing}

We also consider an intermediate model of bracketing lying between
the two extremes, called \emph{$\alpha$-partial-narrow bracketing
(PNB). }Inspired by \citet{barberis2006individual}, \citet{barberis2009preferences},
and \citet{rabin2009narrow}, the dataset $\mathcal{D}$ is \emph{rationalized
by $\alpha$-partial-narrow bracketing }if
\begin{equation}
\left(x^{t,1},\dots,x^{t,K_{t}}\right)=\arg\max_{y^{t,k}\in B^{t,k}\ \forall k}\alpha\sum_{k=1}^{K^{t}}u\left(y^{t,k}\right)+\left(1-\alpha\right)u\left(\sum_{k=1}^{K^{t}}y^{t,k}\right)\label{eq:partial}
\end{equation}
for all $t$.\footnote{The first two papers consider an average of the certainty equivalents,
{[}$u^{-1}(E[u(x^{t})])+b_{0}\sum_{k^{\prime}=1}^{K^{t}}v^{-1}(E[v(x^{t,k})])$.{]}
Unlike our specification, they allow for the narrow utility function
to differ from the broad utility function and assume expected utility.
The algorithm in Theorem \ref{thm: alpha PNB} can be adapted to allow
for different narrow and broad utility functions ($u\neq v$) at the
cost of less predictive power. Our specification is closest to \citet[p. 1513]{rabin2009narrow},
although they also assume expected utility.} A PNB DM's choices maximize a weighted average of the utility of
the final alternative and of the utility of their choice in each part.

A broad bracketer takes into account that good $i$ in part $k$ is
a perfect substitute for good $i$ in part $k^{\prime}$, whereas
a narrow bracketer does not consider any complementarities across
parts of the decision.\footnote{\citet{koszegi2020choice} also base their approach to mental budgeting
on substitutability between parts.} In contrast, a PNB DM views the same good in different parts as imperfect
substitutes for each other. They act as if they make a single choice
in decision $t$ of an element of $\mathbb{R}_{+}^{nK_{t}}$, but
they maximize a utility function derived from $u$ rather than $u$
itself. In contrast, a broad bracketer acts as if they choose a single
element of $\mathbb{R}_{+}^{n}$, and a narrow bracketer as if they
make $K_{t}$ independent choices of elements of $\mathbb{R}_{+}^{n}$.\footnote{Recent models by \citet{vorjohann2020reference} and \citet{zhang2023procedural}
have this feature as well, but with different utilities over $\mathbb{R}_{+}^{nK_{t}}$
than Equation (\ref{eq:partial}). They capture departures from broad
bracketing by either substituting in a reference point for other parts
or by imposing separability where $u$ is inseparable. In a two part
decisions, these models are as follows. \citet{vorjohann2020reference}
considers a DM who maximizes $v\left(x^{t,1},x^{t,2}\right)=u\left(x^{t,1},r^{2}\right)+u\left(r^{1},x^{t,2}\right)$
for reference points $r^{1}\in B^{t,1}$ and $r^{2}\in B^{t,2}$,
instead of $v^{BB}\left(x^{t,1},x^{t,2}\right)=u\left(x^{t,1},x^{t,2}\right)$.
\citet{zhang2023procedural}, for the specification most similar to
our setting, considers a DM who maximizes $U\left(x^{t,1},x^{t,2}\right)=E_{x^{t,1}}\left[u\left(z+u^{-1}\left(E_{x^{t,2}}\left[u(y)\right]\right)\right)\right]$
instead of $U^{BB}\left(x^{t,1},x^{t,2}\right)=E_{x^{t,1}+x^{t,2}}\left[u\left(z\right)\right]$,
where $E_{x}$ is the expectation operator with respect to the lottery
corresponding to $x$.}

An algorithm \emph{tests }whether $\mathcal{D}$ is rationalizable
by $\alpha$-partial-narrow bracketing if it inputs a dataset $\mathcal{D}$
and (deterministically) outputs 1 when $\mathcal{D}$ is rationalizable
by $\alpha$-partial-narrow bracketing and outputs 0 otherwise.
\begin{thm}
\label{thm: alpha PNB}There exists an algorithm that tests whether
$\mathcal{D}$ is rationalizable by $\alpha$-partial-narrow bracketing.
\end{thm}
We construct the algorithm in Appendix \ref{sec:Theoretical-Appendix}.
The algorithm reduces the problem of determining whether $\mathcal{D}$
is consistent with $\alpha$-partial-narrow bracketing to that of
a standard linear programming problem. It runs in polynomial time
with a fixed set of possible outcomes of each decision and its parts.
We elaborate on the ideas behind it, and provide a formal proof in
the Appendix.

The algorithm takes advantage of the formal similarity between Equation
(\ref{eq:partial}) and expected utility. Specifically, an $\alpha-$PNB
DM with utility $u$ chooses $x^{t,1}$ in part 1 and $x^{t,2}$ in
part 2 if and only if an expected utility DM with Bernoulli index
$u$ chooses $\left(\psi\alpha,x^{t,1};\psi\alpha,x^{t,2};\psi(1-\alpha),x^{t}\right)$
from the menu of lotteries $\left\{ \left(\psi\alpha,y^{1};\psi\alpha,y^{2};\psi(1-\alpha),y^{1}+y^{2}\right):y^{1}\in B^{t,1}\ \&\ y^{2}\in B^{t,2}\right\} $
where $\psi=(1+\alpha)^{-1}$. The algorithm first transforms the
original dataset to an ancillary dataset where each decision is a
single choice from a menu of lotteries. Each lottery in the menu is
structured so that an alternative in part $k$ occurs with probability
proportional to $\alpha$ for each part $k$ and their sum occurs
with probability proportional to $(1-\alpha)$. The DM's choice is
the lottery over their choices in each part and their sum. Existing
results, e.g. \citep{clark1993revealed}, provide necessary and sufficient
conditions for the ancillary dataset to be rationalizable by expected
utility. We show that whether or not it can be rationalized can be
determined by examining the solution to a linear program.

We also consider a related intermediate model of bracketing, ``PNB-Personal
Equilibrium (PNB-PE).'' Formally,
\[
x^{t,k}=\underset{y^{t,k}\in B^{t,k}}{\arg\max}\left[\alpha u\left(y^{t,k}\right)+(1-\alpha)u\left(\underset{k^{\prime}\in\left\{ 1\leq k^{\prime}\leq K^{t}:k^{\prime}\neq k\right\} }{\sum}x^{t,k^{\prime}}+y^{t,k}\right)\right].
\]
for every part $k$ of decision $t$. While the functional form is
similar to PNB, the DM acts as if they make $K_{t}$ independent choices
of elements of $\mathbb{R}_{+}^{n}$, as a narrow bracketer would
would. Their choices are the personal equilibrium (PE) of an intrapersonal
game \citep{koszegi2006model,lian2019theory}. A PNB-PE DM only considers
the utility effect of $x^{t,-k}$ when choosing $x^{t,k}$, ignoring
that they could change $x^{t,-k}$ in response to their choice. A
similar algorithm to the one that tests $\alpha$-PNB can test $\alpha$-PNB-PE.

\section{Existing Evidence on Bracketing\label{sec:Evidence}}

The models we consider capture different forms of what \citet{read1999choice}
call choice bracketing: how a person groups ``individual choices
together into sets.'' We first discuss three strands of research
into static, concurrent decisions to which our setting directly applies
before turning to the important case of dynamic, or temporal, bracketing.

Most existing direct evidence of non-broad choice bracketing is derived
from \citet{tversky1981framing}, in which subjects face the decision
problem described in Table \ref{table: TK decision}.
\begin{table}[h]
\caption{Tversky and Kahneman's Decision problem}
\label{table: TK decision}
\begin{quotation}
\noindent Imagine that you face the following pair of concurrent decisions.
First, examine both decisions, then indicate the options you prefer.\\
\begin{centering}
\textbf{Decision (i). Choose between:}
\par\end{centering}
\begin{minipage}[t]{0.4\columnwidth}%
A. a sure gain of \$240 \textcolor{darkgray}{\small{}{[}85\%{]}}%
\end{minipage}%
\begin{minipage}[t]{0.5\columnwidth}%
B. 25\% chance to gain \$1000,

and 75\% chance to gain nothing \textcolor{darkgray}{\small{}{[}16\%{]}}{\small\par}%
\end{minipage}

\smallskip{}
\begin{centering}
\textbf{Decision (ii). Choose between:}
\par\end{centering}
\begin{minipage}[t]{0.4\columnwidth}%
C. a sure loss of \$750 \textcolor{darkgray}{\small{}{[}13\%{]}}%
\end{minipage}%
\begin{minipage}[t]{0.5\columnwidth}%
D. 75\% chance to lose \$1000,

and 25\% chance to lose nothing \textcolor{darkgray}{\small{}{[}87\%{]}}{\small\par}%
\end{minipage}
\end{quotation}
\end{table}
 This design fits into our theoretical setting as a single-decision
dataset with $B^{1,1}=\{A,B\}$ and $B^{1,2}=\{C,D\}$. The choice
combination $(A,D)$ made by 73\% of their subjects generates a first-order
stochastically-dominated distribution over outcomes compared to the
pair of choices $(B,C)$ and so violates BB-Mon and BB-SARP. Their
design cannot falsify narrow bracketing since every combination of
choices satisfies NB-SARP. Without further restrictions, their results
are uninformative about the choice bracketing of subjects who do not
choose A and D. Yet about 70\% of subjects made non $(A,D)$ choices
in incentivized follow-up experiments \citep{rabin2009narrow,koch2019correlates}.

A related set of experiments considers how an exogenously fixed quantity,
like a monetary endowment, an asset, or a background risk, affects
a person's choice when separated from the description of available
alternatives.\footnote{Related theoretical results by \citet{samuelson1963risk}, \citet{rabin2000risk},
\citet{safra2008calibration}, and \citet{mu2020background} suggest
that behavior in small-stakes bets cannot be reconciled with behavior
in large-stakes bets and broad bracketing. The results of \citet{gneezy1997experiment,benartzi1995myopic}
provide support for their theoretical predictions.} For instance, studies suggest subjects behave as-if they do not incorporate
their endowment in risk-taking choices \citep[Problems 11-12]{kahneman1979},
in social allocation tasks \citep{exley2018equity}, and in labor
supply choices \citep{fallucchi2021narrow}. These designs fit into
our theoretical framework as decisions in which one part is a singleton
set (the endowment) and one part a non-singleton set (the active choice),
and these papers provide evidence against broad bracketing.

The final body of evidence shows how failures of fungibility affect
choice \citep{thaler1999mental}. Studies of the ``flypaper effect''
suggest that transfers earmarked for a particular type of spending
tend to actually be spent there \citep{hines1995flypaper,abeler2017fungibility}.
Empirical evidence from consumption choices after unexpected price
changes support the lack of fungibility \citep{hastings2013fungibility,hastings2018snap}.\footnote{\citet{evers2019mental} suggest that similarity mediates how people
group outcomes into categories.}

This evidence has a more complicated relationship with our setting.
Thaler (\citeyear{thaler1985mental,thaler1999mental}) and others
\citep{galperti2019theory,koszegi2020choice} explain the evidence
through mental accounting (or budgeting). At its most general, this
refers to a person breaking up the overall decision into categories.
In contrast, narrow choice bracketing is a failure to aggregate smaller
parts into a larger decision. These are not mutually exclusive. If
categories and parts coincide, our model of narrow choice bracketing
provides an extreme model of mental accounting. But in general, mental
accounting is consistent with either broadly-bracketing across parts
\citep[p 207]{thaler1985mental} or with narrowly bracketing each
part (e.g. as described by Corollary 1 of \citealp{koszegi2020choice}).\footnote{One can extend our framework by replacing the ``SARP'' part of NB-
or BB- SARP with a revealed preference test for mental accounting,
such as that of \citet{blow2018observable}. Then, we have two joint
tests: one for mental accounting where the budgets for each category
are determined part-by-part, and one for mental accounting where the
budgets for each category are determined at the decision-level.}

Choice bracketing is also relevant in dynamic settings. \citet{gneezy1997experiment}
show that subjects make different choices period-by-period than when
they must choose in advance, evidence of non-broad bracketing.\footnote{The comparison in \citet{cox2015paradoxes} of their ``One Task''
treatment and their ``Pay all sequentially'' treatment provides
between-subjects evidence that fails to reject the null that every
subject narrowly brackets. They note that this is consistent with
prior literature that finds little evidence for ``wealth effects''
(i.e. violations of narrow bracketing) in economics experiments that
pay all choices sequentially.} In contrast, recent evidence from \citet{heimer2020dynamic} (building
on \cite{barberis2012model,ebert2015until}) suggests that subjects behave
as if they do not narrowly bracket future risk-taking opportunities.
In financial decision-making, \citet{shefrin1985disposition} and
\citet{odean1998investors} document the disposition effect, and \citet{thaler1990gambling}
and \citet{imas2016realization} document the house money effect.
Either effect is inconsistent with both fully narrow and fully broad
bracketing.

Our approach can be generalized to a dynamic setting, but  two key
complications arise. First, the setting should comprise decision trees
rather than sets of alternatives. Second, behavior may be dynamically
inconsistent. Our approach can be adapted but would apply only under
dynamic consistency. Future work should incorporate dynamic inconsistency.

We note that there are many ways to fail to bracket broadly in a dynamic
setting. For instance, an agent may only look backwards and take into
account the results of their past decisions but neglect to consider
how their current decision interacts with their future ones. Or they
may only look forwards and take into account future interactions but
not past results. Even more combinations are possible: they could
look neither forwards nor backwards, or only look forwards a certain
number of periods.

\section{Experimental Design\label{sec:Experimental-Design}}

We design and conduct three experiments to test the models of bracketing
in different domains of choice. In each experiment, a participant
faces five decision rounds, each consisting of one or two parts.\footnote{Compared to choice-from-budget-set experiments like \citet{choi2007consistency},
each subject faces fewer rounds of decisions in our experiment and
each budget set is coarser to allow us to conduct the experiment on
paper. We made this design choice to minimize potential for learning
to bracket, and also to slow down subjects' progression through the
experiment to encourage slow and deliberate decision-making.} Each part consists of all feasible integer-valued bundles of two
goods obtained from a linear (or in one case, a piece-wise linear)
budget set. At the end of the experiment, exactly one round is randomly
selected for payment, which we call the ``round that counts''. We
sum all goods purchased in all parts of the decision in the round
that counts to obtain the final bundle that determines payments.\footnote{In the Social Experiment we modify this procedure slightly: one person-decision
pair per anonymous group of two subjects is randomly selected to determine
payments of another anonymous group of two subjects.} By design, there are no complementarities across decisions. We implement
this experimental design to study choice bracketing in three domains
of interest: portfolio choice under risk (Risk), a social allocation
task (Social), and a consumer choice experiment in which we induced
subjects' values (Shopping).

In the Risk Experiment, each part of every decision asks the subject
to choose an integer allocation of tokens between two assets. Each
asset pays off on only one of two equally likely states: Asset A (or
C) pays out only on a die roll of 1-3 whereas Asset B (or D) pays
out only on a die roll of 4-6. The payoff of each asset varies across
decision problems and across parts. Because each decision problem
uses assets with two equally likely states, preferences over portfolios
of monetary payoffs for each state should be symmetric across states.

In the Social Experiment, each part of every decision asks the subject
to choose an integer allocation of tokens between two anonymous other
subjects, Person A and Person B. The value of each token to A and
B varies across decision problems and across parts. Because the two
recipients are anonymous, we expect preferences to be symmetric across
money allocated to A versus B.

In the Shopping Experiment, each part of every decision asks the subject
to choose a bundle of integer quantities of fictitious ``apples''
and ``oranges'' subject to a budget constraint. The monetary payment
for the experiment is calculated from the final bundle in the round
that counts according to the function $\$\text{pay}=\frac{2}{5}\left(\sqrt{\text{\#apples}}+\sqrt{\text{\#oranges}}\right)^{2}$.\footnote{Subjects were provided with a payoff table (Online Supplement \ref{sec:experimental materials},
Figure \ref{tab:payofftable}) to calculate the earnings that would
result from any possible final bundle, so could maximize earnings
without having to manually compute this function.} This function induces payoffs that are symmetric in apples and oranges
and that are strictly variety-seeking. Any subject who prefers more
money to less will wish to maximize this payoff function regardless
of their underlying utility function.

\begin{table}[h]
\begin{tabular}{ccccccccccc}
\toprule 
\addlinespace[3pt]
 &  & \multicolumn{3}{c}{Risk} & \multicolumn{3}{c}{Social} & \multicolumn{3}{c}{Shopping}\tabularnewline\addlinespace[3pt]
\midrule 
\addlinespace[3pt]
Decision & Part & $I$ & Asset A/C & Asset B/D & $I$ & $V_{A}$ & $V_{B}$ & $I$ & $p_{a}$ & $p_{o}$\tabularnewline\addlinespace[3pt]
\midrule
\midrule 
\addlinespace[3pt]
\multirow{2}{*}{D1} & 1 & 10 & $(\$1,\$0)$ & $(\$0,\$1.20)$ & 10 & \$1 & \$1.20 & 8 & 2 & 1\tabularnewline\addlinespace[3pt]
\addlinespace[3pt]
 & 2 & 16 & $(\$1,\$0)$ & $(\$0,\$1)$ & 16 & \$1 & \$1 & 24 & 2 & 2\tabularnewline\addlinespace[3pt]
\midrule 
\addlinespace[3pt]
D2 & 1 & 14 & $(\$2,\$0)$ & $(\$0,\$2)$ & 14 & \$2 & \$2 & 32 & 2 & 1 {\footnotesize{}(for 1st 8),} 2\tabularnewline\addlinespace[3pt]
\midrule 
\addlinespace[3pt]
\multirow{2}{*}{D3} & 1 & 10 & $(\$1,\$0)$ & $(\$0,\$1)$ & 10 & \$1 & \$1 & 30 & 3 & 3\tabularnewline\addlinespace[3pt]
\addlinespace[3pt]
 & 2 & 10 & $(\$1,\$0)$ & $(\$0,\$1.20)$ & 10 & \$1 & \$1.20 & 24 & 3 & 2\tabularnewline\addlinespace[3pt]
\midrule 
\addlinespace[3pt]
D4 & 1 & 16 & $(\$1,\$0)$ & $(\$0,\$1)$ & 16 & \$1 & \$1 & 12 & 1 & 1\tabularnewline\addlinespace[3pt]
\midrule 
\addlinespace[3pt]
D5 & 1 & 10 & $(\$\ensuremath{1},\$0)$ & $(\$0,\$1.20)$ & 10 & \$1 & \$1.20 & 48 & 6 & 4\tabularnewline\addlinespace[3pt]
\midrule 
 &  & \multicolumn{9}{c}{{\footnotesize{}$I$: income for a part (in tokens)}}\tabularnewline
 &  & \multicolumn{3}{c}{{\footnotesize{}$(\$x,\$y)$ indicates one}} & \multicolumn{3}{c}{{\footnotesize{}$V_{A}$: value/token to A}} & \multicolumn{3}{c}{{\footnotesize{}$p_{a}$: price/apple}}\tabularnewline
 &  & \multicolumn{3}{c}{{\footnotesize{}unit of asset pays \$x/\$y}} & \multicolumn{3}{c}{{\footnotesize{}$V_{B}$: value/token to B}} & \multicolumn{3}{c}{{\footnotesize{}$p_{o}$: price/orange}}\tabularnewline
 &  & \multicolumn{3}{c}{{\footnotesize{}if the die roll is 1-3/4-6}} &  &  &  &  &  & \tabularnewline
\end{tabular}

\caption{Experimental Tasks}
\label{tab:design}
\end{table}

Our experimental budget sets, summarized in Table \ref{tab:design},
allow us to conduct our revealed preference tests in the Risk and
Social Experiments, and to conduct analogous tests that make use of
the induced value function in our Shopping Experiment. Throughout,
we refer to part $k$ of decision $t$ as $\text{Dt.k}$, or Dt if
a round has only one part. The order of decisions and parts was varied
across sessions (Table \ref{tab:orders} in Online Supplement \ref{sec:experimental materials}).

We implemented our experiments on paper and followed up with some
robustness treatments online (discussed further in Section \ref{sec:Secondary-analyses}
and Online Supplement \ref{sec:Online-Risk-Experiment}). In each
session, paper instructions were provided and read aloud, subjects
were given the opportunity to ask questions privately, and then participants
completed a brief comprehension quiz and the experimenter individually
checked answers and explained any errors. The comprehension quiz had
each subject calculate how payment would be determined from their
allocations when a decision involves two parts. This was designed
to ensure that each subject was aware of how to calculate payments
in these cases, but without instructing them to consider all possible
combinations of allocations across parts.

Each decision had a cover page indicating the number of parts in the
decision, with each part stapled beneath as a separate page. Thus,
a subject was always informed when a decision contained multiple parts,
but could choose whether or not to look at both parts before making
allocations. A subject indicated each allocation by highlighting the
line corresponding to their choice for each part using a provided
highlighter. Subjects were allowed no other aids at their desk when
making choices. Only one decision was handed out to a subject at a
time, and that decision was collected before the next round was handed
out. The order of decisions, and of parts within each decision, was
varied across sessions to allow us to test and control for order effects.

Sessions took place in Toronto Experimental Economics Lab and SFU
Experimental Economics Lab from June 2019 to February 2020, each taking
place in a one hour block. Subjects were recruited from the labs'
student participant pools to participate in one of the three experiments.
Instructions, experimental materials, and details of the experimental
procedure are provided in Online Supplement \ref{sec:experimental materials}.\footnote{The three experiments were separately pre-registered through the Open
Science Framework. Our analysis follows our plans with only minor
modifications that do not affect the interpretation of our results.
See Appendix \ref{sec:RelationToPrereg} for details.}

\section{Results\label{sec:Results}}

This section reports the results of our experimental tests of the
models of bracketing. For each test, we also compute results allowing
for one or two ``errors'' relative to its requirements. We define
an error as how far we would need to move a subject's allocations
for them to pass that test, measured in lines on the decision sheet(s).\footnote{Example decision sheets are provided in Figures \ref{fig:D11sheet},
\ref{fig:D12}, \ref{fig:D5sheet}, and \ref{fig:D32sheet} in Appendix
\ref{sec:experimental materials}.} For instance, in Risk and Social, a subject is within one error of
passing a test if revising the choice by shifting one token from one
asset/person to the other in a single part would lead to choices that
pass that test. They are within two errors of passing a test if shifting
two tokens from one asset/person to the other, either in the same
part or in different parts, would lead to choices that pass. For predictions
that require Walrasian budget sets, we modify the tests to account
for discreteness in our experiment.

In Section \ref{sec:Secondary-analyses}, we examine the robustness
of the following analysis to two particular concerns. First, we argue
our results are robust to changes in the presentation of the decision.
Second, broad bracketing requires allocating all the budget to one
good in two-part decisions, and earlier work suggests subjects may
be loath to do so. We argue there that neither affects our conclusions
much.

\subsection{Risk Experiment: How subjects behave}

Consider Decision 1 in the Risk Experiment. Since Asset B in the first
part is a perfect substitute for Asset D in the second, comparative
advantage requires that the subject first purchases it in the part
where it has a lower opportunity cost. Consequently, a broad bracketer
necessarily allocates all of their wealth in the first part to Asset
B before they allocate any to Asset D in the second (as required by
BB-Mon). If preferences are symmetric across the two equally-likely
states, then broad bracketing further requires that $x_{A}^{1,1}=0$.
With risk aversion, broad bracketing makes even stronger predictions.
The allocation $x_{A}^{1,1}=0$ and $x_{C}^{1,2}=14$ obtains a risk-free
return of \$14, and any other feasible allocation results in a first-
or second-order stochastically dominated distribution over returns
at the decision-level. In contrast, a narrow bracketer facing Decision
1 allocates at least half of their budget to Asset B in Part 1. If
they are also risk-averse, then they allocate exactly half their budget
in Part 2 to each asset.

\begin{figure}[h]
\caption{Allocations in Risk\label{fig:alloc}}
\subfloat[D1 Allocation in Risk]{\includegraphics[width=0.45\textwidth]{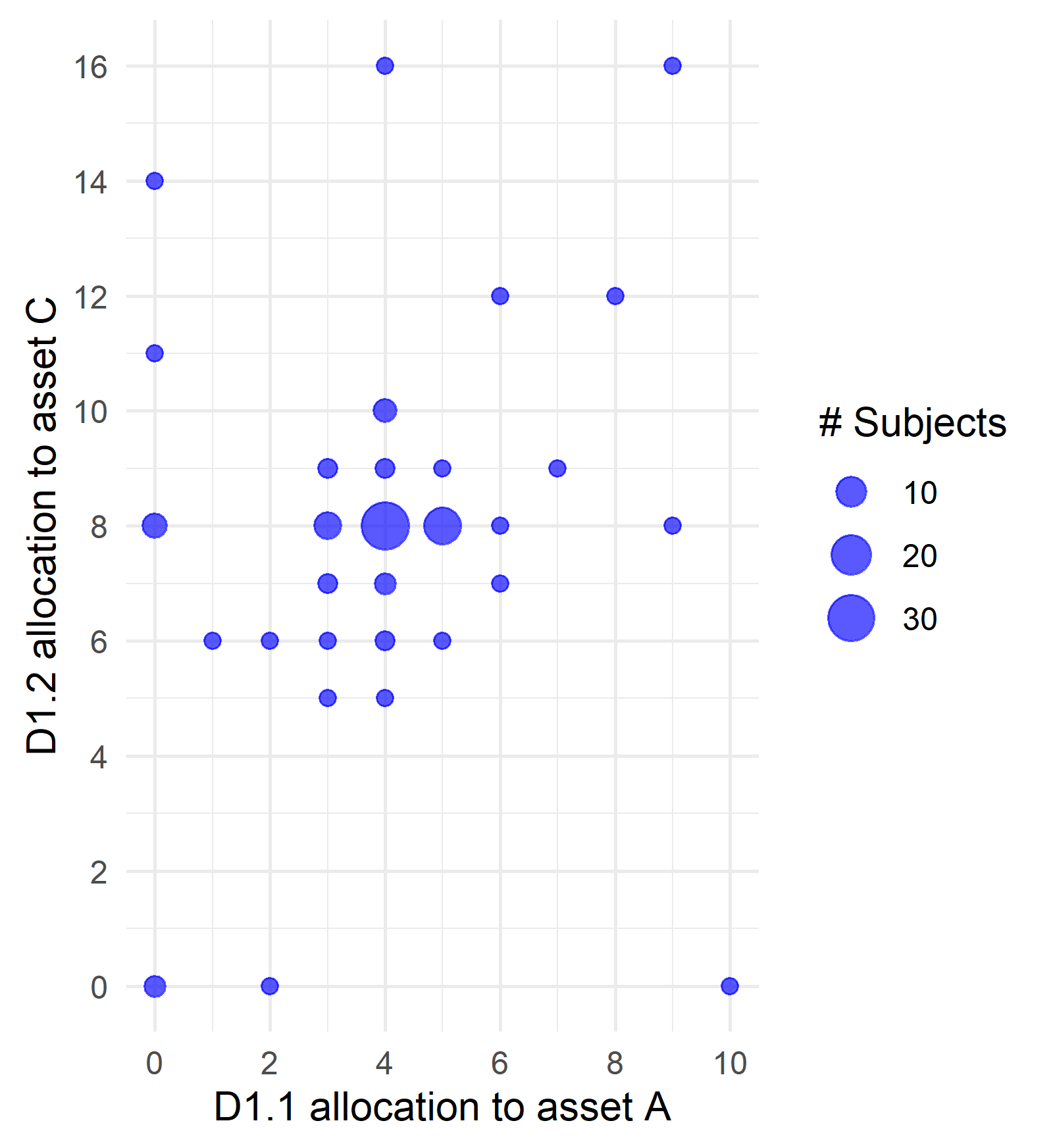}}\subfloat[D3.2 and D5 Allocations]{\includegraphics[width=0.45\textwidth]{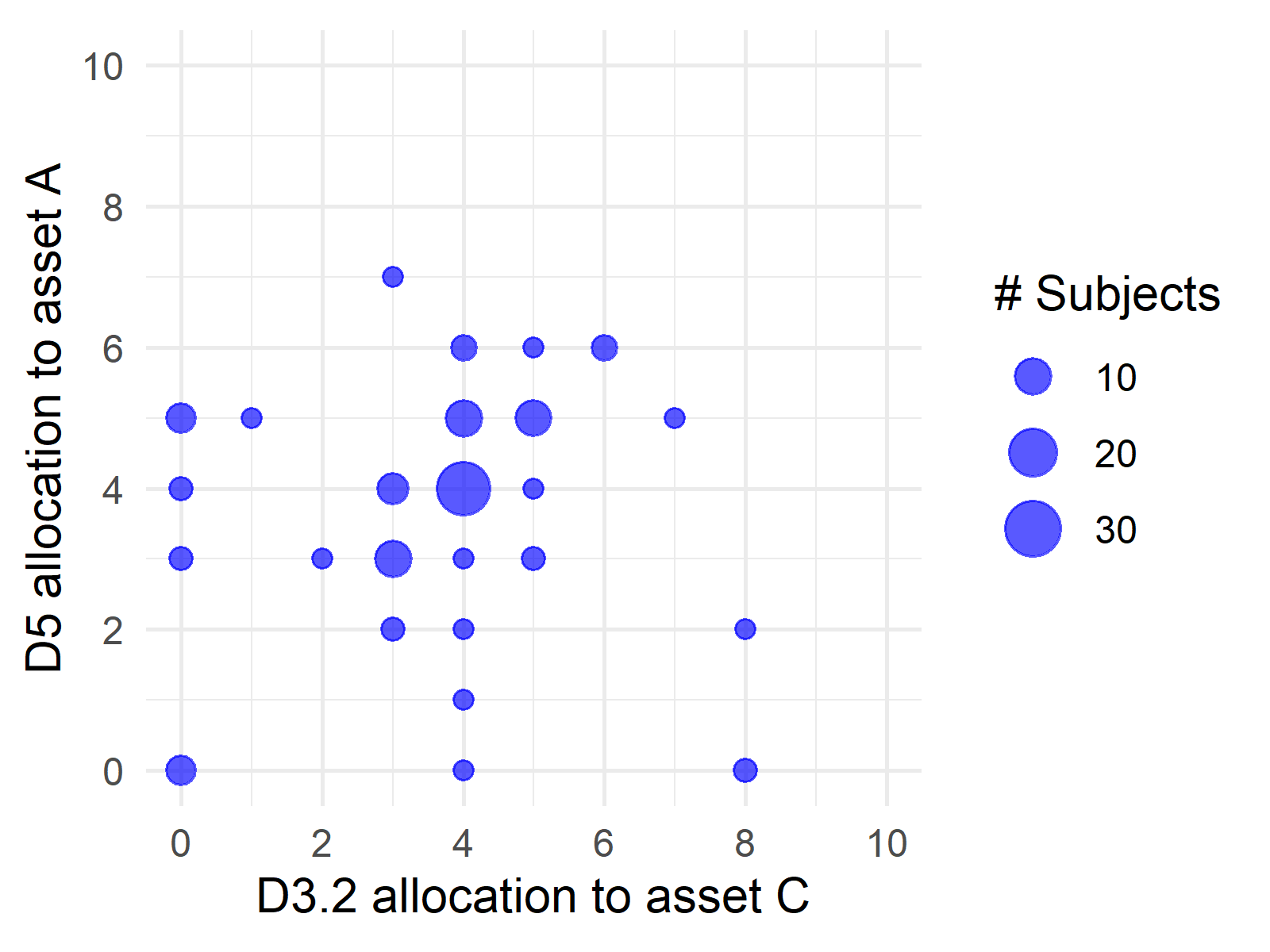}}
\end{figure}

To compare how our subjects perform to this benchmark, Figure \ref{fig:alloc}(A)
plots the joint distribution of their allocations in D1. The x-coordinate
describes their allocation to Asset A in D1.1, and the y-coordinate
their allocation to Asset C in D1.2. The above discussion shows that
broad bracketers will have allocations on the y-axis, and a risk-averse
broad bracketer will select the allocation $\left(x_{A}^{1,1},x_{C}^{1,2}\right)=(0,14)$.
Any narrow bracketer will select an allocation with $x_{A}^{1,1}\leq5$,
and any risk-averse narrow bracketer will select $x_{C}^{1,2}=8$.
The plot shows that few subjects are close to consistent with broad
bracketing, and only a single subject makes an allocation close to
the prediction of risk-averse broad bracketing. Narrow bracketing
with risk aversion is consistent with the modal allocation of $\left(x_{A}^{1,1},x_{C}^{1,2}\right)=\left(4,8\right)$.

In our design, narrow bracketing makes strong predictions across decisions
that have a part in common. For instance, D3.2 and D5 both ask subjects
to allocate 10 tokens between identical assets with identical prices.
A narrow bracketer would make the same allocation in each of the two
parts. To illustrate, Figure \ref{fig:alloc}(B) plots each subject's
allocation in D3.2 against their allocation in D5. The x-coordinate
describes their allocation to Asset C in D3.2, and the y-coordinate
their allocation to its counterpart, Asset A in D5. A narrow bracketer's
choices will fall on the 45-degree line as required by NB-WARP, and
the farther the allocations are from that line the farther the subject
is from narrow bracketing. In the plot, we can see that this prediction
of narrow bracketing holds exactly for 54 of the 99 subjects. We visually
represent all the data underlying our tests in Appendix \ref{sec:Visualizations}.

\subsection{Risk and Social Experiments: Revealed Preference Tests}

We begin by performing the direct revealed preference tests of bracketing
developed in Section \ref{sec:Theory}: NB-WARP, BB-WARP, and BB-Mon
(Table \ref{tab:WARPtests}).

\begin{table}[h]
\begin{tabular}{ccccccccc}
 &  & \multicolumn{3}{c}{Risk} &  & \multicolumn{3}{c}{Social}\tabularnewline
\# errors & $\;$ & 0 & 1 & 2 & $\;$ & 0 & 1 & 2\tabularnewline
\hline 
\hline 
\noalign{\vskip2pt}
NB-WARP (D1.1 and D5) &  & 56 & 76 & 89 &  & 45 & 70 & 77\tabularnewline[2pt]
\noalign{\vskip2pt}
\noalign{\vskip2pt}
NB-WARP (D1.2 and D4)) &  & 56 & 74 & 81 &  & 63 & 78 & 82\tabularnewline[2pt]
\noalign{\vskip2pt}
\noalign{\vskip2pt}
NB-WARP (D3.2 and D5) &  & 54 & 76 & 83 &  & 49 & 75 & 80\tabularnewline[2pt]
\noalign{\vskip2pt}
\noalign{\vskip2pt}
NB-WARP (D1.1 and D3.2) &  & 49 & 76 & 85 &  & 51 & 83 & 87\tabularnewline[2pt]
\noalign{\vskip2pt}
\noalign{\vskip2pt}
NB-WARP (all) &  & 29 & 44 & 61 &  & 28 & 54 & 64\tabularnewline[2pt]
\noalign{\vskip2pt}
\noalign{\vskip2pt}
BB-WARP (D1 and D2) &  & 13 & 20 & 87 &  & 16 & 20 & 94\tabularnewline[2pt]
\noalign{\vskip2pt}
\noalign{\vskip2pt}
BB-Mon (D1) &  & 12 & 13 & 15 &  & 14 & 14 & 14\tabularnewline[2pt]
\noalign{\vskip2pt}
\noalign{\vskip2pt}
BB-Mon (D3) &  & 14 & 16 & 18 &  & 17 & 17 & 18\tabularnewline[2pt]
\noalign{\vskip2pt}
\noalign{\vskip2pt}
BB-Mon (both) &  & 7 & 8 & 10 &  & 12 & 12 & 12\tabularnewline[2pt]
\hline 
\noalign{\vskip2pt}
\noalign{\vskip2pt}
\# subjects &  & \multicolumn{3}{c}{99} &  & \multicolumn{3}{c}{102}\tabularnewline[2pt]
\noalign{\vskip2pt}
\multicolumn{9}{c}{{\scriptsize{}Entries count the \# of subjects who pass test at the
listed error allowance.}}\tabularnewline
\end{tabular}

\caption{Tests of NB-WARP and BB-WARP}
\label{tab:WARPtests}
\end{table}

Very few subjects are consistent with rationality and broad bracketing.
There is only a single pair of decisions (D1 and D2) where choices
could directly violate BB-WARP. For that pair, we test BB-WARP by
comparing the final bundle for D1 to the final bundle for D2: for
any choice of $x^{2,1}$ in D2 with $x_{A}^{2,1}\leq8$, the same
final bundle can be achieved in D1. In each of Risk and Social, only
20\% of subjects are within one error of passing BB-WARP.\footnote{Allowing for two errors substantially raises pass rates of BB-WARP
in Risk and Social. As predicted by both risk averse narrow and broad
bracketing, many subjects, 66\% in Risk and 84\% in Social, choose
$x_{A}^{2,1}=7$. The jump at two errors happens because a subject
who allocates $x_{A}^{2,1}>8$ trivially satisfies BB-WARP.} Even fewer subjects are consistent with BB-Mon than with BB-WARP.
In Risk and Social respectively, we find that 8\% and 12\% of subjects
are within one error of passing BB-Mon in both decisions. Looking
separately at D1 and D3, between 13\% - 17\% of subjects are within
one error of passing BB-Mon.

All told, the BB-WARP and BB-Mon tests provide evidence showing that
80\%-92\% of subjects are not broad bracketers. These rates of violations
of broad bracketing are qualitatively similar to, but higher than,
those found by \citet{tversky1981framing} (73\%) and \citet{rabin2009narrow}
(28\%-66\%), and very close to the structural estimates of the latter
(89\%). In these prior experiments, each part consisted of a pairwise
choice, so failures to broadly bracket are detected only for a particular
range of risk preferences. In contrast, there are many ways a subject
could reveal their failure to bracket broadly in our experiments,
which gives us more power to detect failures.

While previous work can only falsify broad bracketing, our design
allows us to test narrow bracketing as well. We test NB-WARP by comparing
the allocations in each of the two parts that appear in multiple decisions.
Specifically, NB-WARP requires that a subject makes the same choice
in D1.1, D3.2, and D5, and the same choice in D1.2 and D4. Far more
subjects pass each NB-WARP test than either BB-WARP or BB-Mon. Between
75-77\% of subjects in Risk and 69-81\% of subjects in Social are
within one error of passing each of the pairwise NB-WARP tests. Allowing
for one error, 44\% and 53\% of subjects pass all possible NB-WARP
tests in Risk and Social, respectively.\footnote{In contrast, randomly-generated choices have only a 0.4\% chance of
being within one error of passing all NB-WARP tests.}

\begin{result}When allowing for one error, 44\% and 53\% of subjects
are consistent with WARP under narrow bracketing, 20\% and 20\% are
consistent with WARP under broad bracketing, and 8\% and 12\% are
consistent with monotonicity under broad bracketing in the Risk and
Social Experiments, respectively.

\end{result}

We next conduct our revealed preference tests of the three models
considered using the entire set of decisions for each subject. Notice
that in both experiments, the alternative $(x,y)$ leads to an identical
outcome as the alternative $(y,x)$.\footnote{Either leads to a lottery paying $x$ and $y$ with equal probability,
or to a payment of $x$ to one anonymous individual and $y$ to another.} As discussed in Section \ref{sec:Theory}, we extend the tests so
that whenever $(a,b)P^{\mathcal{D}}(x,y)$, i.e., $(a,b)$ is directly
revealed preferred to $(x,y)$ in the dataset $\mathcal{D}$, we also
have $(a,b)P^{\mathcal{D}}(y,x)$, as well as $(b,a)P^{\mathcal{D}}(x,y)$
and $(b,a)P^{\mathcal{D}}(y,x)$. This reduces the need to compare
across decisions and makes our tests more demanding. For example,
all three tests make point predictions in D2 and D4. Random behavior
has a 0.001\% chance or less of passing either BB- or NB-SARP with
one error, and a 0.3\% chance of passing the PNB Algorithm with one
error (see Appendix \ref{sec:Bronars}). Table \ref{tab:FullTests}
shows how many subjects pass each test.

\begin{table}[h]
\begin{tabular}{cc>{\centering}p{0.04\paperwidth}>{\centering}p{0.04\paperwidth}>{\centering}p{0.04\paperwidth}>{\centering}p{0.04\paperwidth}>{\centering}p{0.04\paperwidth}>{\centering}p{0.04\paperwidth}>{\centering}p{0.04\paperwidth}}
\noalign{\vskip2pt}
 &  & \multicolumn{3}{c}{Risk} &  & \multicolumn{3}{c}{Social}\tabularnewline[2pt]
\noalign{\vskip2pt}
\# errors &  & 0 & 1 & 2 &  & 0 & 1 & 2\tabularnewline
\hline 
\noalign{\vskip2pt}
NB-SARP &  & 23 & 34 & 43 &  & 15 & 36 & 44\tabularnewline[2pt]
\noalign{\vskip2pt}
\noalign{\vskip2pt}
BB-SARP &  & 0 & 0 & 0 &  & 8 & 10 & 10\tabularnewline[2pt]
\noalign{\vskip2pt}
\noalign{\vskip2pt}
PNB &  & 49 & 59 & 71 &  & 31 & 58 & 69\tabularnewline[2pt]
\noalign{\vskip2pt}
\noalign{\vskip2pt}
PNB-PE &  & 51 & 63 & 75 &  & 39 & 67 & 76\tabularnewline[2pt]
\noalign{\vskip2pt}
\# subjects &  & \multicolumn{3}{c}{99} &  & \multicolumn{3}{c}{102}\tabularnewline
\multicolumn{9}{c}{{\scriptsize{}Entries count the \# of subjects who pass each test
at the listed error allowance.}}\tabularnewline
\end{tabular}

\caption{Full Tests of Symmetric Models}
\label{tab:FullTests}
\end{table}

We compare results of the tests allowing for up to one error. We find
that no subjects pass BB-SARP in Risk, and only 10\% of subjects pass
it in Social. However, 34-35\% of subjects in each experiment pass
NB-SARP. While fewer subjects pass BB-SARP than either BB-WARP or
BB-Mon, a similar number of subjects pass all NB-WARP restrictions
with one error and the more demanding NB-SARP with two errors. These
results show that a plurality of our subjects are well-described as
narrow bracketers.

Our test of partial-narrow bracketing diagnoses how many of those
who fail BB-SARP and NB-SARP behave consistently with intermediate
degrees of bracketing. We find that 15\% of subjects in Risk and 12\%
of subjects in Social pass the PNB test but neither BB-SARP nor NB-SARP
when allowing for one error.\footnote{Broad and narrow bracketing are both special cases of partial-narrow
bracketing, and thus any subject who passes BB- or NB-SARP will also
pass this test.} Only 4\% of subjects in Risk and 9\% of subjects in Social pass the
PNB-PE test but not the PNB test.

\begin{result}

When allowing for one error relative to each test, 34\% and 35\% of
subjects are rationalizable by narrow bracketing,
0\% and 10\% of subjects pass are rationalizable by broad bracketing in the Risk and Social Experiments, respectively.
An additional 25\% and 12\% of subjects are rationalizable by partial narrow bracketing but not narrow nor broad bracketing.

\end{result}

We note briefly that a subject whose underlying preferences are symmetric
and linear would make the same choices regardless of how they bracket.\footnote{Symmetric linear preferences require only that $x_{A}^{1,1}=x_{A/C}^{3,2}=x_{A}^{5,1}=0$.
Linear preferences are the only class of preferences consistent with
both narrow and broad bracketing (Appendix \ref{sec:Theoretical-Appendix},
Theorem \ref{thm:NB+BB}).} They would be indifferent between many bundles in our experiment
and so violate our tests' assumption that choices reveal strict preferences.
However, this cannot bias our conclusions too much since only five
subjects in Risk and two subjects in Social could possibly be consistent
with it.\footnote{One subject in each of Risk and Social pass NB-SARP, and the other
subject in Social passes BB-SARP when allowing one error. Two other
subjects in Risk also had $x_{C}^{1,2}=x_{A}^{3,1}=0$, and so pass
BB-WARP, BB-Mon, and NB-WARP but neither NB-SARP nor BB-SARP. In total,
one in each of Risk and Social are classified as broad, two in Risk
and one in Social are classified as narrow, and the remaining two
in Risk are unclassified.}

\subsection{Shopping Experiment: Induced-Value Tests}

The tests of our predictions thus far assume that utility is not observed.
When utility is known, as in our Shopping Experiment, narrow and broad
bracketing each make unique predictions in each decision.\footnote{Narrow-bracketed maximization implies $x_{a}^{1,1}=1$, $x_{a}^{1,2}=6$,
$x_{a}^{3,1}=5,$ and $x_{a}^{3,2}=4$. Broad-bracketed maximization
implies $x_{a}^{1,1}=0,x_{a}^{1,2}=10,x_{a}^{3,1}=10,$ and $x_{a}^{3,2}=0$.} To test the models, we compare how far each subject's choices are
from each model's predictions (Table \ref{tab:ShoppingTests}).

\begin{table}[h]
\begin{tabular}{ccccccccccccccccc}
\noalign{\vskip2pt}
 &  & \multicolumn{3}{c}{D1} &  & \multicolumn{3}{c}{D3} &  & \multicolumn{3}{c}{Both} &  & \multicolumn{3}{c}{Full}\tabularnewline[2pt]
\noalign{\vskip2pt}
\# errors & $\;$ & 0 & 1 & 2 & $\;$ & 0 & 1 & 2 & $\;$ & 0 & 1 & 2 & $\;$ & 0 & 1 & 2\tabularnewline
\hline 
\noalign{\vskip2pt}
NB &  & 23 & 23 & 60 &  & 53 & 65 & 69 &  & 20 & 21 & 49 &  & 15 & 16 & 40\tabularnewline[2pt]
\noalign{\vskip2pt}
\noalign{\vskip2pt}
BB &  & 21 & 22 & 27 &  & 23 & 24 & 25 &  & 13 & 15 & 18 &  & 12 & 14 & 16\tabularnewline[2pt]
\noalign{\vskip2pt}
\noalign{\vskip2pt}
PNB &  & 45 & 46 & 95 &  & 76 & 91 & 98 &  & 33 & 37 & 68 &  & 27 & 30 & 56\tabularnewline[2pt]
\noalign{\vskip2pt}
\noalign{\vskip2pt}
PNB-PE &  & 45 & 46 & 95 &  & 76 & 93 & 98 &  & 33 & 37 & 70 &  & 27 & 30 & 57\tabularnewline[2pt]
\noalign{\vskip2pt}
\# subjects &  & \multicolumn{15}{c}{101}\tabularnewline
\multicolumn{17}{c}{{\scriptsize{}Entries count the \# of subjects who pass each test
at the listed error allowance.}}\tabularnewline
\end{tabular}

\caption{Shopping Tests}
\label{tab:ShoppingTests}
\end{table}

Testing the point predictions of narrow bracketing in each of Decisions
1 and 3, 23\% and 64\% of subjects are respectively within one error
of the predictions of narrow-bracketing, while 21\% are consistent
in both. Allowing for two errors raises pass rates to 59\% for Decision
1 and 49\% for both.\footnote{The sharp difference between one- and two-error tests in Decision
1 but not Decision 3 results from the fact that 40 subjects selected
$x_{a}^{1,1}=2,x_{o}^{1,1}=4$, which is the second-best available
bundle from a narrow bracketer's perspective but is two lines away
from the best bundle of $x_{a}^{1,1}=1,x_{o}^{1,1}=6$ -- due to
the discreteness of the budget set, one apple and five oranges is
between these bundles. We note that this deviation is in the opposite
direction of that predicted by broad or partial-narrow bracketing,
and 32 of these subjects also selected $x_{a}^{1,2}=x_{o}^{1,2}=6$.} Looking at the full set of implications of narrow-bracketing on all
choices made in the experiment, 40\% of subjects are within two errors
of passing.

In contrast, 21\%, 24\%, and 14\% of subjects are within one error
of being consistent with broadly-bracketed maximization in Decisions
1, 3, and both, respectively. When allowing for two errors, those
numbers remain similar. Using all decisions in the experiment, only
16\% of subjects are within two errors of being consistent with all
implications of broad-bracketed maximization. PNB describes less than
2\% of the remaining subjects.

Since we induce the payoff function, we are able to compute the value
of $\alpha$ that best fits a subjects behavior (up to the limits
imposed by discretization of the budget sets) under the assumption
that the induced value function acts as their utility function. To
that end, we compute the point predictions of the partial-narrow bracketing
model for each $\alpha\in\{0,0.01,0.02,\dots,0.99,1\}$ for Decisions
1 and 3, and obtain distinct predictions for nine intervals of $\alpha$.
We assign each subject to the range of $\alpha$ for which their choices
exhibit the fewest errors relative to that range's predictions. We
find that 64\% of subjects are classified to a range that includes
full narrow bracketing, $\alpha=1$, and 25\% are classified to a
range that includes full broad bracketing, $\alpha=0$. Of the remaining
subjects, none are best described by $\alpha\in[0.25,0.71]$. This
suggests that even those subjects who are not exactly described by
either broad or narrow bracketing are close.\footnote{Because of discreteness, $[0,1]$ can be partitioned so that the parameters
in each cell make the same choices. In this case, the partition is
$\{[0,0.04],[0.05,0.06],[0.07,0.14],[0.15,0.24],[0.25,0.29],[0.3,0.38],[0.39,0.71],[0.72,0.75],[0.76,1]\}$.}

\subsection{Classifying subjects to models}

The tests thus far do not make any adjustment for the fact that partial-narrow
bracketing nests narrow and broad bracketing as polar cases, and can
thus accommodate more behavior. To compare the predictive success
of each model at the subject level, we use a subject-level implementation
of the Selten score \citep{selten1991properties,beatty2011demanding}.
For each subject and each model (symmetric versions for Risk and Social,
using the induced value function for Shopping), we calculate the number
of errors the subject exhibits relative to that model. Then, we calculate
the number of possible choice combinations in the experiment that
are consistent with that model and that number of errors -- the model-error
pair's ``predictive area''. We divide the predictive area by the
total number of possible combinations of choices in the experiment
to compute the measure for each subject $i$ and model $m\in\text{\{broad, narrow, PNB},\text{ PNB-PE}\}$
as $\text{predictive\_success}_{i,m}=1-\frac{\#\text{predictive\_area\_for\_}i,m}{\#\text{all\_possible\_choices}}$.\footnote{In the Risk and Social Experiments, there are $(11\times17)\times15\times(11\times11)\times17\times11=63,468,735$
possible combinations of choices. Symmetric narrow bracketing allows
6, 87, and 606 possible combinations of choices when allowing for
zero, one, and two errors, respectively, whereas symmetric broad bracketing
allows 12, 116, and 585 combinations of choices, symmetric PNB allows
35,797, 200,828, and 597,728 combinations, and symmetric PNB-PE allows
116,267, 619,375, and 1,725,466 combinations. Thus, a subject whose
choices are consistent with partial-narrow bracketing will be classified
as a partial-narrow bracketer if and only if they are sufficiently
far from being consistent with both broad and narrow bracketing.} We use all choices made in the experiment to assign each subject
to the model with the highest predictive success; in cases where every
rationalizing model-error pair for a subject would rationalize more
than one million possible combinations of choices in our experiment,
we categorize them as ``Unclassified''.
\begin{table}[h]
\begin{tabular}{cccc}
 & \multicolumn{3}{c}{Percent Selten Score Maximized}\tabularnewline
\noalign{\vskip2pt}
 & Risk & Social & Shopping\tabularnewline[2pt]
\hline 
\noalign{\vskip2pt}
\hline 
\noalign{\vskip2pt}
Broad Bracketing & 2.02 & 9.80 & 26.73\tabularnewline[2pt]
\noalign{\vskip2pt}
\noalign{\vskip2pt}
 & {\footnotesize{}(0.35,7.81)} & {\footnotesize{}(4.32,15.29)} & {\footnotesize{}(17.99,35.48)}\tabularnewline[2pt]
\noalign{\vskip2pt}
\noalign{\vskip2pt}
Narrow Bracketing & 77.78 & 75.49 & 67.32\tabularnewline[2pt]
\noalign{\vskip2pt}
\noalign{\vskip2pt}
 & {\footnotesize{}(69.49,86.06)} & {\footnotesize{}(67.10,83.87)} & {\footnotesize{}(57.98,76.67)}\tabularnewline[2pt]
\noalign{\vskip2pt}
\noalign{\vskip2pt}
PNB & 7.07 & 1.96 & 3.96\tabularnewline[2pt]
\noalign{\vskip2pt}
\noalign{\vskip2pt}
 & {\footnotesize{}(3.13,14.51)} & {\footnotesize{}(0,4.68)} & {\footnotesize{}(0.13,7.78)}\tabularnewline[2pt]
\noalign{\vskip2pt}
\noalign{\vskip2pt}
PNB-PE & 0 & 3.92 & 0.99\tabularnewline[2pt]
\noalign{\vskip2pt}
\noalign{\vskip2pt}
 & {\footnotesize{}(0,0)} & {\footnotesize{}(0.22,7.61)} & {\footnotesize{}(0,2.84)}\tabularnewline[2pt]
\noalign{\vskip2pt}
\noalign{\vskip2pt}
Unclassified & 13.13 & 8.82 & 0.99\tabularnewline[2pt]
\noalign{\vskip2pt}
\noalign{\vskip2pt}
 & {\footnotesize{}(6.42,19.83)} & {\footnotesize{}(3.46,14.18)} & {\footnotesize{}(0,2.91)}\tabularnewline[2pt]
\noalign{\vskip2pt}
\noalign{\vskip2pt}
\multicolumn{4}{c}{{\scriptsize{}To calculate confidence intervals, we assume that the
model of bracketing is multinomially}}\tabularnewline[2pt]
\noalign{\vskip2pt}
\noalign{\vskip2pt}
\multicolumn{4}{c}{{\scriptsize{}distributed with fixed strike rate within each treatment
and calculate the \citet{wilson1927probable} score.}}\tabularnewline[2pt]
\noalign{\vskip2pt}
\end{tabular}

\caption{Classification of subjects}
\label{tab:classification}
\end{table}

Across the three experiments, we classify 67-78\% of subjects as narrow
bracketers (Table \ref{tab:classification}). In contrast, 2\%, 10\%,
and 27\% of subjects are classified as broad bracketers in the Risk,
Social, and Shopping Experiments respectively. However, 7\%, 6\%,
and 5\% were classified to one of the two partial-narrow bracketing
models. After adjusting for predictive power, partial-narrow bracketing
does not help explain very many subjects' behavior.

To form confidence intervals, we assume that model of bracketing is
multinomially distributed with fixed strike rate within each treatment.
We calculate confidence intervals according to the \citet{Wilson1927}
score. In all three treatments, the lower bound of the interval for
narrow bracketers exceeds the upper bound of the intervals for all
other models of bracketing. Most other intervals overlap, but there
are significantly more broad bracketers than partial-narrow bracketers
in the Shopping Experiment.

\begin{result}Judging each model's fit by its predictive success,
67-78\% of subjects are classified as narrow bracketers, 2-27\% are
classified as broad bracketers, and 5-7\% are classified as partial-narrow
bracketers across the three experiments.

\end{result}

\section{Robustness and Secondary Analyses\label{sec:Secondary-analyses}}

To study how choice architecture mediates bracketing, we conducted
an online version of our Risk Experiment that varied the presentation
in a two-by-two design. First, the ``Examine'' treatment instructed
the subject to ``First, examine both accounts, then purchase your
investments'' in the instructions, tested this in a quiz question,
then included that text at the top of each two-part decision screen;
the ``Basic'' treatment did not. Second, the ``Tabs'' treatment
presented each part of a decision as a separate HTML tab (analogous
to our separate pages in our paper experiments), whereas the ``Side-by-Side''
treatment presented parts of a decision side-by-side on the same screen
(analogous to \citealt{tversky1981framing}). We recruited 200 US-based
subjects from Prolific Academic to participate and randomly assigned
each to one of the four treatment pairs.

We also replicated our Shopping experiment online with 46 subjects
from Prolific Academic. For all 46, we implemented the Side-by-Side
and Examine interventions, eliminated the quantity-restricted sale
in D2, and provided subjects with a calculator instead of a payoff
table. We provide detailed screenshots and results in the Online Supplement.

In Section \ref{subsec:choice architecture}, we use the online experiments
to argue that presentation has a limited effect on our results.The
online experiments allow us to collect non-choice data that shed light
on the subjects' choice processes. In Section \ref{subsec:Welfare-losses},
we use this data to argue that a substantial fraction of narrow bracketers
give some consideration to both parts of the decision before choosing.
We then explore the robustness of our tests of broad bracketing to
extremeness aversion, investigate the possibility of order effects
or learning, and perform statistical tests of the differences we find
across the domains.

\subsection{Effects of choice architecture\label{subsec:choice architecture}}

The online subject pool is more slanted towards narrow bracketing
than our original Pen-and-paper study. Pass rates for each NB-WARP
test are higher, and pass rates for BB-WARP and BB-Mon are generally
lower. Only 2 of 200 (1\%) subjects pass BB-SARP when allowing for
two errors, and these two, and only these two, are classified as broad
bracketers according to Selten score. In contrast, 96 (48\%) subjects
pass NB-SARP when allowing one error, and 84.5\% are classified as
narrow bracketers by Selten score.

We find almost no effect on choices from either of the two interventions
in the Online Risk Experiment. Both subjects who pass BB-SARP are
in the Examine and Tabs treatment, contrary to our expectation that
Side-by-Side would be more conducive to broad bracketing. Neither
treatment has a large or statistically significant (with a small sample
size caveat) effect on the rate of broad bracketing ($p=0.22,p=0.24$,
for Fisher's exact tests of the Examine and Tabs treatments, respectively).
Within the Tabs group, Examine does not have a statistically significant
effect ($p=0.20$, Fisher's exact test). We similarly find no effect
of the treatments on the rates of narrow bracketing ($p=1.00$ for
both, Fisher's exact tests).

In addition, the Online Shopping Experiment implemented both Examine
and Side-by-Side. It provided subjects with a calculator instead of
a payoff table. We classified 5 out of 46 (10.86\%) subjects to broad
bracketing and 33 (71.74\%) to narrow. The rate of narrow bracketing
did not change much, while the rate of broad bracketing slightly decreased
relative to the pen-and-paper Shopping Experiment ($p=0.03$, Fisher's
exact test).

All-in-all, this suggests that the low rates of broad bracketing we
find are not overly sensitive to varying the choice architecture to
encourage broad bracketing. We caution against over-interpreting this
result since broad bracketing was so rare in all treatments. The shift
to an online interface and the Prolific subject pool may have had
more effect than any of the nudges. We suspect that more extreme nudges
and decision aids might be more effective.

\subsection{Non-choice data and bracketing\label{subsec:Non-choice-data}}

Our online experiments shed light on how the choice process, and in
particular consideration, differed across subjects. We have two tools
for measuring consideration. In the Tabs arm of Online Risk, we observe
which tabs subjects clicked and when they made their decision. In
Online Shopping, we record how subjects used the calculator. This
non-choice data shows that only some narrow bracketers completely
ignore other parts of the decision. In both experiments, more than
a quarter of narrow bracketers gave some consideration to both parts
of the decision. These subjects had enough information to bracket
more broadly yet did not.

In the Tabs versions of the Online Risk Experiment, we record whether
each subject clicked on both tabs before making their final choices.
Only 28 of 102 subjects did so in both D1 and D3. While this includes
the 2 broad bracketers, the other 26 subjects had sufficient information
to bracket broadly but failed to do so. Of the 28, 22 were in the
Examine-Tabs treatment, and 6 were in the Basic-Tabs treatment ($p<0.01$,
Fisher's exact). The prompt was effective at increasing this click
pattern, but it did not significantly affect the classification to
broad bracketing ($p=0.20$, Fisher's exact). Clearly, paying attention
to both parts of a decision is necessary to bracket broadly. However,
the fraction of narrow bracketers among those who paid attention was
not substantially different than the fraction among those who did
not (21 of 28 versus 61 of 74, $p=0.41$, Fisher's exact). At the
other extreme, we find that 33 subjects made their final choice in
each of D1 and D3 before having seen both parts, and thus could not
have been broad bracketing; 29 of these subjects are classified as
narrow bracketers.

While intensity of calculator use does not differ much between broad
and narrow bracketers in the Online Shopping Experiment, patterns
of calculator use do.\footnote{The average (median) number of calculations was 10.40 (12) for broad
bracketers, 12.21 (8) for narrow bracketers, and 12.86 (10) for all
subjects. As a whole,} Ex-ante, we expected that plugging in bundles that are feasible in
the decision as a whole (e.g. (9, 11) for D3) to the calculator (a
``broad calculation'') would predict broad bracketing, and that
plugging in a bundle that is available in a part of a decision but
lies below the decision-level feasible set (e.g. (5,5) for D3) to
the calculator (a ``narrow calculation'') would predict narrow bracketing.
We classify nine of ten subjects of who made a narrow calculation
as narrow bracketers. Surprisingly, we classify nine of 16 subjects
who made at least one such ``broad calculation'' as narrow bracketers,
and only three as broad. These subjects appear to actively attend
to how the parts fit together, yet still make narrowly-bracketed choices.

In summary, the non-choice data suggest diversity in the relationship
between consideration and bracketing. It suggests that about a quarter
of narrow bracketers appear to follow a choice process that involves
only considering one part at a time, but this same data also shows
that a quarter or more of narrow bracketers gave some consideration
to both parts. The seemingly weak link between broader consideration
and broadly-bracketed choices suggests why the interventions we designed
to encourage broad bracketing in the Online Risk Experiment were not
very effective. However, the sparsity of broad bracketers in both
experiments makes it hard for us to tell what drives it. The choice
processes leading to narrow or broad bracketing is a fertile direction
for future work.

\subsection{Effects of Extremeness Aversion\label{subsec:Extremeness}}

As we describe in Section \ref{sec:Theory}, broad bracketing requires
a corner solution in at least one part of any multi-part decision.
Evidence from other contexts suggests that some subjects may be \emph{extremeness
averse} and avoid corner choices. For example, subjects in linear
public good games tend not to play the Nash strategy of making no
contributions. However, in non-linear public good games where the
Nash strategy requires a positive but not complete contribution, subjects
play the equilibrium strategy more frequently (see e.g. Section 2
of \citealp{vesterlund2016using}, for a discussion). Extremeness
aversion could affect our conclusions about broad bracketing and its
prevalence relative to narrow bracketing.

Non-extreme allocations in two-part decisions lead to violations of
BB-Mon. However, few additional subjects are consistent with a relaxation
of BB-Mon that allows for subjects to be close to, but not necessarily
at, a corner when required. Formally, we say that a subject passes
\emph{Extremeness-Averse (EA-)BB-Mon} if they are within 2 tokens
of making an extreme allocations in each decision required by BB-Mon.
Without allowing for any errors, 3 subjects in Risk and 0 subjects
in Social pass EA-BB-Mon but not BB-Mon. Allowing for 2 errors, 14\%
of subjects in Risk and 3\% in Social pass EA-BB-Mon but not BB-Mon.
As a benchmark, 26\% of all possible datasets pass EA-BB-Mon but not
BB-Mon when allowing for two errors.

Extremeness aversion is unlikely to affect our classification or tests.
Of the 77 subjects classified as narrow bracketers in each of Risk
and Social, none pass EA-BB-Mon but not BB-Mon (allowing for two errors,
this increases to six and one, respectively). No subject in either
Risk or Social passes both NB-SARP and EA-BB-Mon but not BB-Mon when
allowing for no or one errors in each test. Allowing two errors in
each test, only 3 subjects in Risk and 1 in Social pass both NB-SARP
and EA-BB-Mon but not BB-Mon. Since BB-Mon is a necessary but not
sufficient condition for broad bracketing, a highly risk-tolerant
or inequity-neutral narrow bracketer would pass both NB-SARP and BB-Mon.
Only one subject in each does so. Somewhat more risk- or inequity-averse
narrow bracketers will also pass EA-BB-Mon. 

This suggests that the effect of extremeness aversion on our tests
is probably not too large. In addition, a similar number of subjects
are consistent with broad bracketing in decisions that require two
corner choices as are consistent in decisions that require only one
corner choice.\footnote{A broad bracketer in Shopping buys only apples in D3.1, only oranges
in D3.2, and only oranges in D1.1, but buys both fruits in D1.2. D3
requires two extreme choices, and D1 requires only one. As reported
in Table \ref{tab:ShoppingTests}, there are a very similar number
of subjects close to these predictions in D1 and D3.} The evidence still suggests heterogeneity in bracketing, with a plurality
best described as narrow bracketers, even after adjusting for extremeness
aversion.\footnote{We also conducted a follow-up experiment for which only one of the
two treatments requires a corner choice. To get more separation between
narrow and broad, we used a much finer choice grid for each budget.
Unfortunately, this made the data too noisy to draw strong conclusions.
No more than 14.8\% of subjects are close to the predictions of either
narrow or broad bracketing combined in either treatment. Nonetheless,
the ratio of broad to narrow bracketers does not vary much across
treatments, lending weak support to our conclusion. We elaborate on
this experiment in the Online Supplement, Section \ref{sec:Online-Shopping-Experiment}.}

\subsection{Welfare losses and measurement\label{subsec:Welfare-losses}}

So far, we have focused on identifying how our subjects bracketed.
We now apply our classification to quantify in dollar terms the payoffs
forgone by narrow bracketers (Table \ref{tab:losses}). We also illustrate
how looking at aggregate data can lead to misleading inferences. A
narrow bracketer's preference parameters should be measured using
parts as the unit of observation, and broad bracketer's should use
decisions as the unit of observation. Using the wrong unit of observation,
necessary for an aggregate analysis with heterogeneous bracketing,
leads to misleading conclusions about preferences. We focus on the
Social and Shopping Experiments because there are only two participants
classified as broad bracketers in Risk.

\begin{table}[h]
\begin{tabular}{cc>{\centering}p{0.05\paperwidth}>{\centering}p{0.05\paperwidth}>{\centering}p{0.05\paperwidth}>{\centering}p{0.05\paperwidth}>{\centering}p{0.05\paperwidth}>{\centering}p{0.05\paperwidth}>{\centering}p{0.05\paperwidth}>{\centering}p{0.05\paperwidth}}
\multicolumn{2}{c}{Panel A} &  &  &  &  &  &  &  & \tabularnewline
 &  & \multicolumn{7}{c}{Payoff loss (\$)} & \tabularnewline
 &  & D1.1 & D1.2 & D1 & D3.1 & D3.2 & D3 & D5 & $n$\tabularnewline
\hline 
\multicolumn{2}{c}{Risk} & 0.39 & 0 & 0.39 & 0 & 0.35 & 0.35 & 0.38 & 99\tabularnewline
 & NB & 0.40 & 0 & 0.40 & 0 & 0.38 & 0.38 & 0.41 & 77\tabularnewline
 & BB & 0 & 0 & 0 & 0 & 0.20 & 0.20 & 0.30 & 2\tabularnewline
\multicolumn{2}{c}{Social} & 0.47 & 0 & 0.47 & 0 & 0.46 & 0.46 & 0.49 & 102\tabularnewline
 & NB & 0.51 & 0 & 0.51 & 0 & 0.51 & 0.51 & 0.49 & 77\tabularnewline
 & BB & 0 & 0 & 0 & 0 & \multicolumn{1}{c}{0.03} & 0.03 & 0.44 & 10\tabularnewline
\multicolumn{2}{c}{Shopping} & 0.56 & 0.51 & 1.06 & 1.03 & 0.85 & 1.25 & 0.05 & 101\tabularnewline
 & NB & 0.22 & 0.07 & 1.41 & 0.01 & 0.07 & 1.67 & 0.03 & 68\tabularnewline
 & BB & 1.23 & 1.31 & 0.35 & 3.76 & 2.94 & 0.14 & 0.04 & 27\tabularnewline
\multicolumn{10}{l}{{\scriptsize{}``Payoff loss'' columns report total \$ loss (Shopping),
loss in expected value (Risk), and}}\tabularnewline
\multicolumn{10}{l}{{\scriptsize{}average loss per recipient (Social) compared to expected/average
value maximization. We}}\tabularnewline
\multicolumn{10}{l}{{\scriptsize{}evaluate each part on its own in columns D1.1 and D3.2
and integrate across parts in}}\tabularnewline
\multicolumn{10}{l}{{\scriptsize{}columns D1 and D3. In D1.2 and D3.1 all allocations
have no loss in either Risk or Social.}}\tabularnewline
 &  &  &  &  &  &  &  &  & \tabularnewline
\end{tabular}

\begin{tabular}{cc>{\centering}p{0.05\paperwidth}>{\centering}p{0.05\paperwidth}>{\centering}p{0.05\paperwidth}>{\centering}p{0.05\paperwidth}>{\centering}p{0.05\paperwidth}>{\centering}p{0.05\paperwidth}>{\centering}p{0.05\paperwidth}>{\centering}p{0.05\paperwidth}}
\multicolumn{2}{c}{Panel B} &  &  &  &  &  &  &  & \tabularnewline
 &  & \multicolumn{7}{c}{Payoff difference (\$)} & \tabularnewline
 &  & D1.1 & D1.2 & D1 & D3.1 & D3.2 & D3 & D5 & $n$\tabularnewline
\hline 
\hline 
\multicolumn{2}{c}{Risk} & 4.24 & 2.24 & 5.39 & 1.88 & 4.70 & 4.51 & 3.79 & 99\tabularnewline
 & NB & 3.47 & 0.91 & 3.91 & 0.70 & 3.72 & 3.98 & 3.24 & 77\tabularnewline
 & BB & 12.00 & 9.00 & 3.00 & 3.00 & 7.60 & 4.60 & 6.60 & 2\tabularnewline
\multicolumn{2}{c}{Social} & 3.10 & 2.39 & 1.54 & 2.29 & 3.35 & 1.91 & 2.58 & 102\tabularnewline
 & NB & 1.64 & 0.51 & 1.62 & 0.60 & 1.68 & 1.68 & 2.17 & 77\tabularnewline
 & BB & 12.00 & 12.00 & 0.00 & 10.00 & 11.34 & 1.46 & 2.56 & 10\tabularnewline
\multicolumn{10}{c}{{\scriptsize{}``Payoff difference'' columns report the difference
in \$ allocations across the two states or people.}}\tabularnewline
\end{tabular}\caption{Losses and payoff differences by parts and decisions}
\label{tab:losses}
\end{table}

In the Shopping Experiment, we can precisely measure welfare losses
through the earnings that participants did not receive.\footnote{Interpreting payoff losses as welfare losses requires that broad bracketing
does not have a direct utility cost.} Narrow bracketers had lower earnings than broad bracketers in both
two-part decisions. They had \$1.41 and \$1.67 in forgone payments
compared to just \$0.35 and \$0.14. The differences between the maximum
and the minimum payoffs are \$9.60 and \$10.36 in D1 and D3, so the
payoff losses for narrow bracketers are 10.9\% and 14.8\% of the variable
payment. In the one-part decision D5, narrow and broad bracketers
should make identical choices. They do indeed make very similar choices,
suggesting that arithmetic errors do not explain the difference. Indeed,
narrow bracketers do a good job at optimizing part-by-part, and would
forgo no more than \$0.22 if the compensation was based on individual
parts of a decision rather than the decision as a whole.\footnote{There is a similar pattern for expected or aggregate payoffs in the
Risk and Social Experiments, but risk or inequity aversion make interpreting
this as a welfare loss more ambiguous.}

The ``Payoff difference'' columns of Table \ref{tab:losses} illustrate
how it is easy to obtain misleading inferences about how subjects
make the equity-efficiency trade-off. In the one-part decision D5,
narrow bracketers implemented allocations that differed by an average
of \$2.17 between A and B, while the corresponding average difference
is \$2.56 for broad bracketers ($p=0.18$, rank sum test comparing
D5 allocations). Narrow and broad bracketers make similar trade-offs
between equity and efficiency from one-part decisions. Yet if we look
at D1.2 or D3.1 on its own, ignoring the other part of the decision,
we would incorrectly infer that broad bracketers care mainly about
efficiency, while narrow bracketers make near-equal allocations on
average, consistent with inequity aversion. We would obtain an opposite
conclusion by looking at D1 as a whole, where narrow bracketers' decision-level
allocations involve some inequity on average, but all broad bracketers
make perfectly equal final allocations.

\subsection{Differences across experiments}

The fraction of subjects classified to broad bracketing varies across
experiments, from 1\% for Online Risk to 27\% in Pen-and-paper Shopping.
There is a significant difference in broad bracketing rates between
Pen-and-paper Shopping and each of Risk and Social ($p<0.01$, Fisher's
exact test) as well as between the Online and Pen-and-paper Shopping
experiments ($p<0.01)$.\footnote{The difference between Online and Pen-and-paper Risk was not significant,
but this is to be expected given the negative effect on broad bracketing
and the already small numbers of broad bracketers.} However, narrow bracketing rates varied much less, from 67\% to 78\%
across Pen-and-paper experiments, with no significant differences
in these rates ($p>0.10$ for all pairwise Fisher's exact tests).
Nor was there a significant difference in narrow bracketing rates
between the Online and Pen-and-paper versions of Risk and Shopping
($p>0.10$ for both Fisher's exact tests). Explanations for the higher
rate of broad bracketing in Shopping and lower rate in Risk include
the more naturalistic setting, the presence of an objectively-correct
payoff function, and cognitive difficulties specific to choice under
risk (as suggested by \citealt{martinez2019failures}). We cannot
distinguish between these explanations.

\section{Conclusion}

We propose revealed preference tests for how a person brackets their
choices that rely only on monotonicity of underlying preferences.
We deploy these tests in an experiment where both narrow and bracketing
make falsifiable predictions, unlike in past work. Across our experiments,
at least twice as many subjects were classified as narrow bracketers
than as broad bracketers. A majority of people tend to narrowly bracket,
while a noticeable minority broadly bracket. While many of our subjects
are not well-described by either broad or narrow bracketing, our novel
tests of partial-narrow bracketing suggest that it does not do much
better after adjusting for predictive power. This suggests that applications
should calibrate a population mix of broad and narrow bracketers rather
than a representative agent model with a calibrated partial-narrow
bracketing parameter (as in \citealp{BarberisHuang2007equity}).

Bracketing rates differ across tasks, domains, and subject pools.
While our framework is well-suited to detect and to measure these
differences, it is less well-suited to determine why these differences
persist. Non-choice data, as we collected in the online follow up
experiment, can help. It seems to rule out uniform explanations for
why so many bracket narrowly, such as lack of awareness of complementarities
across parts. We think understanding why people bracket the way they
do is an interesting direction for future work.

\bibliographystyle{authordate1}
\bibliography{bracketing}

\appendix

\section{Theoretical Appendix: Proofs, Derivations, and PNB Algorithm\label{sec:Theoretical-Appendix}}

\subsubsection*{Proof of Theorem \ref{thm:The-following-are}.}

One can follow the usual proof that SARP holds if and only if a dataset
is rationalizable by a complete and transitive preference relation
to establish the result with a preference relation; see e.g. \citealp{mas-colell.whinston.ea95}.
Establishing rationalization with a utility function requires a bit
more work. We provide an argument below as we are unaware of one in
the literature.

Given $\mathcal{O}=\left(x^{i},B^{i}\right)_{i=1}^{N}$ for either
$\mathcal{O}=\mathcal{D}^{NB}$ or $\mathcal{O}=\mathcal{D}^{BB}$,
set $X^{\prime}=\cup_{i}B^{i}$. Observe that each $B^{i}$ is a compact
set, either by assumption (for $\mathcal{O}=\mathcal{D}^{NB}$) or
since the finite sum of compact sets is compact (for $\mathcal{O}=\mathcal{D}^{BB}$).
Thus $X^{\prime}$ is also compact. It is therefore bounded as a subset
of $\mathbb{R}^{n}$. Let $X$ be a closed ball centered at $0$ containing
$X^{\prime}$.

Let $\succsim$ be the transitive closure of $P^{\mathcal{O}}$ defined
on $X$. We show that $\succsim$ is closed as a subset of $X\times X$.
First, note $P^{\mathcal{O}}$ itself is closed as a finite union
of the closed sets $\{(y,z)\in X\times X:y\geq z\}$ and $\{x^{i}\}\times B^{i}$
for all $i=1,\dots,N$. Second, we show that $y\succsim z$ if and
only if there exist $y_{1},y_{2},\dots,y_{K-1},y_{K}$ so that $y_{i}P^{\mathcal{O}}y_{i+1}$
for all $i$, $y_{1}=y$, and $y_{K}=z$ where $K\leq2N+1$. Consider
any $y\succsim z$. By definition, there exist $y_{1},y_{2},\dots,y_{K}$
so that $y_{i}P^{\mathcal{O}}y_{i+1}$ for all $i$, $y_{1}=y$, and
$y_{K}=z$ and there is no shorter sequence that establishes $y\succsim z$.
If $y_{j}\geq y_{j+1}$ and $y_{j+1}\geq y_{j+2}$ for some $i$,
then $y_{j}\geq y_{j+2}$, and we can construct a shorter sequence
by leaving $y_{j+1}$ out. If $y_{j}\not\geq y_{j+1}$, then $y_{j}=x^{i(j)}$
and $y_{j+1}\in B^{i(j)}$ for some $i(j)$. Then, $y_{k}\neq y_{j}$
for all $k>j$ by SARP, so at most one index $j^{\prime}$ has $i(j^{\prime})=i$
for each $i$. Therefore, the longest possible shortest sequence alternates
$\geq$ with direct revelations from choice. There can be no more
than $2N+1$ of these, so $K\leq2N+1$. Moreover, since we can trivially
append $y_{K+1}=y_{K}$ to the end of such sequences, we can consider
only sequences with exactly $2N+1$ in what follows.

Now, pick any sequence $(y_{m},z_{m})_{m=1}^{\infty}$ contained in
$\succsim$ that converges to $(y,z)$. Then, there exists $y_{m}^{2},\dots,y_{m}^{2N}$
so that $y_{m}P^{\mathcal{O}}y_{m}^{2}P^{\mathcal{O}}\dots P^{\mathcal{O}}y_{m}^{2N}P^{\mathcal{O}}z_{m}$.
The sequence $(y_{m},y_{m}^{2},\dots,y_{m}^{2N},z_{m})$ has a convergent
subsequence $(y_{m_{k}},y_{m_{k}}^{2},\dots,y_{m_{k}}^{2N},z_{m_{k}})$
since $X\times\dots\times X$ is compact by the Tychonoff Theorem.
Let $(y^{*},y^{2*},\dots,y^{2N*},z^{*})$ be its limit. Since $P^{\mathcal{O}}$
is closed, $y^{*}P^{\mathcal{O}}y^{2*}P^{\mathcal{O}}\dots P^{\mathcal{O}}y^{2N*}P^{\mathcal{O}}z^{*}$.
Finally, $\left(y^{*},z^{*}\right)=\left(y,z\right)$ since $\left(y_{m},z_{m}\right)\rightarrow\left(y,z\right)$.
Hence $\succsim$ is closed.

Now, apply \citealp{Levin1983}'s Theorem, as in \citealp{NishimuraOkQuah2017},
to get a utility function $U$ from $X$ to $[0,1]$ so that $x\succ y$
implies $U(x)>U(y)$ and $x\succsim y$ implies $U(x)\geq U(y)$.
Since $\geq$ is included in $\succsim$, $U(\cdot)$ is increasing.
Since $x^{i}\succ y$ for all $y\in B^{i}\setminus\left\{ x^{i}\right\} $,
$U(x^{i})>U(y)$ for all $y\in B^{i}\setminus\left\{ x^{i}\right\} $.
Extending to all of $\mathbb{R}_{+}^{n}$ is straightforward, simply
set $U(y)=V(y)$ whenever $y\notin X$ for an arbitrary increasing
function $V(\cdot)$ picked so that $V(y)>1$ for all $y$.

\subsubsection*{Proof of Theorem\ref{thm: alpha PNB}.}

Fix $\alpha$ and $\mathcal{D}$. For every decision $t$, define
the lottery
\[
p^{t}=\frac{\alpha K_{t}}{\alpha K_{t}+(1-\alpha)}\left(\frac{1}{K_{t}},x^{t,k}\right)_{k=1}^{K_{t}}+\frac{(1-\alpha)}{\alpha K_{t}+(1-\alpha)}\left(1,x^{t}\right),
\]
and $Y\subseteq\mathbb{R}^{n}$ to be a finite set that includes the
union of the supports of $p^{1},\dots,p^{T}$ and the vector $0$.
It will be convenient to take $Y$ equal to this set but this is inessential.
Denote by $\Delta Y$ the finite support lotteries over $Y$. Consider
the set 
\[
Q_{Y}^{t}=\left\{ \frac{\alpha K_{t}}{\alpha K_{t}+(1-\alpha)}\left(\frac{1}{K_{t}},y^{k}\right)_{k=1}^{K_{t}}+\frac{(1-\alpha)}{\alpha K_{t}+(1-\alpha)}\left(1,y\right)\in\Delta Y:\exists z^{k}\in B^{t,k}\text{ so that }y^{k}\leq z^{k}\,\&\ y\leq\sum z^{k}\right\} 
\]
is a finite set of lotteries over $Y$ that includes $p^{t}$ and
any others that are affordable. The ancillary dataset 
\[
\mathcal{D}_{Y}^{\alpha}=\left\{ \left(p^{t},Q_{Y}^{t}\right)\right\} _{t=1}^{T}
\]
  is strictly rationalizable by expected utility if and only if $\mathcal{D}$
is rationalizable by $\alpha$-partial-narrow bracketing.

If so, then there exists an EU preference $V$ over $\Delta Y$ so
that
\[
V\left(p^{t}\right)>V\left(q^{t}\right)\,\forall q^{t}\in Q_{Y}^{t}\backslash p^{t}
\]
 for all $t$. Let $v$ be the utility index of $V$ and extend to
$\mathbb{R}_{+}^{n}$ by 
\[
u^{*}(z)=\max_{y\leq z,y\in Y}v(y).
\]
Then for any $t$ and $y^{t,1}\times\dots\times y^{t,K_{t}}\in B^{t,1}\times\dots\times B^{t,K_{t}}\setminus\left\{ x^{t,1}\times\dots\times x^{t,K_{t}}\right\} $,
\[
\alpha\sum_{k}u^{*}(x^{t,k})+(1-\alpha)u^{*}(x^{t})=\kappa_{t}V\left(p^{t}\right)>\kappa_{t}\max_{q^{t}\in Q_{Y}^{t}\backslash p^{t}}V\left(q^{t}\right)\geq\alpha\sum_{k}u^{*}(y^{t,k})+(1-\alpha)u^{*}\left(\sum_{k=1}^{K_{t}}y^{t,k}\right)
\]
where $\kappa_{t}=\alpha K_{t}+(1-\alpha)$.\footnote{The max exists since $Q_{Y}^{t}$ is a finite set.}
To find a strictly increasing $u$, let $u^{**}$ be a strictly increasing
extension of $v$. Choosing $\epsilon>0$ small enough, for $u=(1-\epsilon)u^{*}+\epsilon u^{**}$,
\[
\alpha\sum_{k}u(x^{t,k})+(1-\alpha)u(x^{t})>\alpha\sum_{k}u(y^{t,k})+(1-\alpha)u(\sum_{k=1}^{K_{t}}y^{t,k})
\]
 for all $t$ and any $y^{t,1}\times\dots\times y^{t,K_{t}}\in B^{t,1}\times\dots\times B^{t,K_{t}}\setminus\left\{ x^{t,1}\times\dots\times x^{t,K_{t}}\right\} $,
establishing that $\mathcal{D}$ is rationalized by $\alpha$-PNB.\footnote{This approach is similar to that of \citet{Polissonetal2020}.}

Conditions under which $\mathcal{D}_{Y}^{\alpha}$ is rationalized
by expected utility are well-known. Here, we follow \citet{clark1993revealed}.
As standard, $p$ is directly revealed preferred to $q$, written
$p\succ q$, if $p=p^{t}$ and $q\in Q_{Y}^{t}\setminus\left\{ p^{t}\right\} $
for some decision $t$. Define $p\succsim q$ if $p\succ q$ or $p=q$.
We say that $p$ is indirectly reveal preferred via independence to
$q$, written $p\tilde{\succsim}q$, whenever there exist $r,p_{1},\dots,p_{n},q_{1},\dots,q_{n}\in\Delta Y$,
$\beta\in(0,1]$, and $\lambda_{1},\dots,\lambda_{n}>0$ so that 
\begin{align*}
\beta p+(1-\beta)r & =\sum_{i=1}^{n}\lambda_{i}p_{i}\\
\beta q+(1-\beta)r & =\sum_{i=1}^{n}\lambda_{i}q_{i}\\
p_{i} & \succsim q_{i}\text{ for all }i=1,\dots,n\\
\sum_{i=1}^{n}\lambda_{i} & =1
\end{align*}
and $p\tilde{\succ}q$ holds whenever at $p_{i}\succ q_{i}$ for some
$i$. The Linear Axiom of Revealed Preference (LARP) is that $q\tilde{\succsim}p$
implies that $p\tilde{\succ}q$ does not hold. Theorem 3 of \citet{clark1993revealed},
combined with the finiteness of $\mathcal{D}^{\alpha}$, implies that
$\tilde{\succsim}$ has an expected utility rationalization. Conclude
$\mathcal{D}^{\alpha}$ satisfies LARP if and only if $\mathcal{D}$
is rationalized by $\alpha$-partial-narrow bracketing.

The following algorithm inputs $\mathcal{D}^{\alpha}$ and outputs
$1$ if and only $\mathcal{D}$ is rationalized by $\alpha$-partial-narrow
bracketing for any rational number $\alpha\in[0,1]$. For concreteness,
label $Y=\left\{ y_{1},y_{2},\dots,y_{m}\right\} $. We identify each
$p\in\Delta Y$ with the $p\in\mathbb{R}^{m}$ so that $p_{i}=p(y_{i})$.

Define the matrix $A$ with a column equal to $p^{t}-q$ for each
$q\in Q_{Y}^{t}\backslash p^{t}$ and every $t$. $A$ is an $D\times m$
matrix interpreted as strict preference, where $D$ is the number
of data points extracted. Let $\vec{1}$ be a $D\times1$ matrix of
ones and 
\[
\hat{A}=\left[\begin{array}{c}
A\\
\vec{1}
\end{array}\right].
\]
Let $\hat{b}$ be an $m+1$ vector with the first $m$ components
$0$ and the last $1$.

Note that if $p\tilde{\succsim}q$, then there exists $\beta>0,$$\lambda_{i}\geq0$,
and $p_{i}\succsim q_{i}$ for $i=1,...,m$ so that 
\[
\beta(p-q)=\sum_{i=1}^{m}\lambda_{i}(p_{i}-q_{i})
\]
and $\sum_{i=1}^{m}\lambda_{i}=1$. Each $p_{i}-q_{i}$ corresponds
to a column of $A$, so this is equivalent to $p-q\in C=cone(co(\left\{ A_{1},\dots,A_{D}\right\} ))$
where $A_{i}$ is the $i$th column of $A$. LARP holds when $p-q\in C$
implies $q-p\notin C$, or, equivalently, if and only if $0\notin C$.
That is, there does not exist $\phi^{1},\dots,\phi^{D}\geq0$ so that
$\sum\phi^{i}=1$ so that $\sum_{i=1}^{D}\phi^{i}A_{i}=0$, which
is equivalent to $\hat{A}\phi=\hat{b}$.\footnote{To extend to allow for some weak preference, e.g. implications of
symmetry, let $B$ be an $D'\times m$ matrix with \textbf{columns}
given by $p-q$ where $p$ and $q$ represent comparisons including
weak preference and $\vec{0}$ be a $D'\times1$ matrix of zeroes.
Repeat the above with $\hat{A}=\left[\begin{array}{cc}
A & B\\
\vec{1} & \vec{0}
\end{array}\right]$.}

Solve the system
\begin{align*}
 & \min_{y\in\mathbb{R}^{m+1}}y^{T}\hat{b}\\
s.t. & \hat{A}^{T}y\geq0
\end{align*}
for $y$. The Ellipsoid algorithm provides a polynomial-time solution
to this. Let $y^{*}$ be the solution, if one exists. There are two
cases:

\textbf{Case 1: }There is no solution to $\hat{A}^{T}y\geq0$ or $y^{*}$
satisfies $y^{*T}\hat{b}\geq0$. Then, Farkas's Lemma implies there
exists $\phi\geq0$ so that $\hat{A}\phi=\hat{b}$. LARP fails by
the above,  so return $0$.

\textbf{Case 2:} The program is unbounded below or $y^{*T}\hat{b}<0$.
Then, Farkas's Lemma implies there \textbf{does not} exist any $\phi\geq0$
so that $\hat{A}\phi=\hat{b}$. LARP is satisfied by the above, so
return 1. 

To specialize the above to PNB-PE, we perform the analysis part by
part rather than decision by decision. Let $p^{t,k}=(\alpha,x^{t};(1-\alpha),x^{t,k})$,
$Y\subseteq\mathbb{R}^{n}$ be a finite set that includes the union
of the supports of the lotteries $p^{t,k}$, and 
\[
Q_{Y}^{t,k}=\left\{ \left(\alpha,y;(1-\alpha),y^{\prime}\right)\in\Delta Y:\text{\ensuremath{\exists}}x\in B^{t,k}\text{ so that }y^{\prime}\leq x+x^{t,-k}\,\&\,y\leq x\right\} 
\]
Repeating the above with the ancillary dataset 
\[
\mathcal{D}_{Y}^{PE,\alpha}=\left\{ \left(p^{t,k},Q^{t,k}\right)\right\} _{(t,k)}
\]
replacing $\mathcal{D}^{\alpha}$ establishes the result.

\hfill{}$\square$
\begin{thm}[Identification of $\alpha$.]
\label{prop:IDalpha} Suppose that $u$ is a known differentiable
and strictly quasi-concave function and $\mathcal{D}$ is rationalized
by partial-narrow bracketing. Further suppose that for some $t,k$,
$B^{t,k}$is a Walrasian budget set for prices $p^{t,k}$, and $n=2$,
$\frac{p_{1}^{t,1}}{p_{2}^{t,1}}\neq\frac{p_{1}^{t,2}}{p_{2}^{t,2}}$,
$\underset{x_{1}\rightarrow0^{+}}{\lim}\frac{\partial u(x_{1},x_{2})}{\partial x_{1}}=+\infty$
for all $x_{2}>0$, and $\underset{x_{2}\rightarrow0^{+}}{\lim}\frac{\partial u(x_{1},x_{2})}{\partial x_{2}}=+\infty$
for all $x_{1}>0$. \\
Then, there exists a unique $\alpha$ that rationalizes choices.
\end{thm}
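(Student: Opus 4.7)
My plan is to extract first-order conditions at the two subdecisions of decision $t$ that are affine in $\alpha$, and then use the assumption $r_1 \neq r_2$, where $r_k := p_1^{t,k}/p_2^{t,k}$, to rule out their simultaneous degeneracy.

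Because $\mathcal{D}$ is rationalized by $\alpha$-partial-narrow bracketing, for each $k\in\{1,2\}$, $x^{t,k}$ maximizes $\alpha u(x) + (1-\alpha)u(x + x^{t,-k})$ over the Walrasian budget $B^{t,k}$. The first step is interiority: $x^{t,k}\in\mathbb{R}_{++}^2$. If some coordinate (say good 1) of $x^{t,k}$ were zero while the other were positive, the Inada condition forces $u_1\to\infty$, so the narrow term $\alpha u$ alone would have unbounded marginal gain in the direction of good 1, contradicting optimality; and the origin is ruled out since the Walrasian budget with positive income $I^{t,k}$ binds. Differentiability then yields the Lagrangian FOC, and equating the MRS of the combined objective to $r_k$ gives, writing $x^t:=x^{t,1}+x^{t,2}$,
$$\alpha\bigl[u_1(x^{t,k})-r_k u_2(x^{t,k})\bigr] + (1-\alpha)\bigl[u_1(x^t)-r_k u_2(x^t)\bigr] = 0.$$
Denote the two bracketed terms $A_k$ and $C_k$, so the equation becomes $\alpha A_k + (1-\alpha)C_k = 0$, affine in $\alpha$.

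The uniqueness argument is then short. Any rationalizing $\alpha$ must solve both equations ($k=1,2$). If $A_k\neq C_k$ for some $k$, that equation alone pins $\alpha$ uniquely, and we are done. Suppose instead $A_1=C_1$ and $A_2=C_2$. If $A_k=C_k\neq 0$, the equation collapses to the nonzero constant $A_k$ and has no solution, contradicting rationalizability; hence $A_k=C_k=0$ for $k=1,2$. But $C_k=0$ says $u_1(x^t)/u_2(x^t)=r_k$, which cannot hold for both $k=1$ and $k=2$ given $r_1\neq r_2$. This contradiction shows at least one FOC is non-degenerate in $\alpha$ and thus determines $\alpha$ uniquely.

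The main technical hurdle I anticipate is justifying interiority of the maximizer of the \emph{combined} objective using Inada conditions stated only for $u$; one has to handle the case where coordinates of $x^{t,-k}$ themselves vanish, but the narrow term $\alpha u$ alone carries the divergence, so the argument goes through. Strict quasi-concavity of $u$ plays no role in the uniqueness of $\alpha$ per se; it only serves to ensure that the optimum is characterized by the FOC, which rationalizability already delivers. The rest is pure linear algebra, leveraging the strict separation of price ratios across the two subdecisions.
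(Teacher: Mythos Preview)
Your approach is the paper's: derive first-order conditions for the $\alpha$-weighted objective at each subdecision, note they are affine in $\alpha$, and use $r_1\neq r_2$ to exclude simultaneous degeneracy. The paper additionally writes down a closed-form expression for $\alpha$, whereas you only argue the affine system has at most one root; this is cosmetic.

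There is one genuine gap. Your interiority step rests entirely on ``the narrow term $\alpha u$ alone would have unbounded marginal gain,'' which presupposes $\alpha>0$. If the rationalizing $\alpha$ is zero the narrow term vanishes and Inada has no force; indeed, with $r_1\neq r_2$ the broad problem over the Minkowski sum $B^{t,1}+B^{t,2}$ will typically place at least one subdecision at a corner of its own budget, so boundary $x^{t,k}$ are \emph{not} excluded by the hypotheses. The paper handles this by branching on the observed data first: if some coordinate of some $x^{t,k}$ is zero, the contrapositive of your Inada argument shows no $\alpha>0$ can rationalize, so $\alpha=0$ is the unique candidate and you are done. Only when every observed $x^{t,k}$ is strictly interior do you pass to the FOC, and there your affine argument is valid for all $\alpha\in[0,1]$, including $\alpha=0$ (an interior optimum of the $\alpha=0$ problem over $B^{t,k}$ forces $C_k=0$). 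The fix is a one-line case split before asserting interiority; the remainder of your argument is correct as written.
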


\subsubsection*{Proof of Theorem \ref{prop:IDalpha}.}

Suppose the assumptions of the Theorem are satisfied. In the partial-narrow
bracketing model, the marginal utility per dollar of spending on good
$j$ in decision $k$ is given by:

\[
\frac{1}{p_{j}^{t,k}}\left((1-\alpha)\frac{\partial u(x^{t,1}+x^{t,2})}{\partial x_{j}}+\alpha\frac{\partial u(x^{t,k})}{\partial x_{j}}\right)
\]
for $k=1,2$. By the limit conditions, if $\alpha>0$, then $x_{j}^{t,k}>0$
for $j=1,2$ and $k=1,2$; thus $x_{1}^{t,k}=0$ or $x_{2}^{t,k}=0$
implies $\alpha=0$. If $\frac{1}{p_{2}^{t,k}}\partial u(x^{t,k})/\partial x_{2}=\frac{1}{p_{1}^{t,k}}\partial u(x^{t,k})/\partial x_{1}$
for both $k$, then set $\alpha=1$. Now suppose that $\frac{1}{p_{2}^{t,k}}\partial u(x^{t,k})/\partial x_{2}\neq\frac{1}{p_{1}^{t,k}}\partial u(x^{t,k})/\partial x_{1}$
and an interior allocation is chosen in budget set $k$. The first-order
condition for an interior maximizer equates the marginal utility per
dollar across the two goods; rearranging this condition, we obtain

\[
\alpha=\frac{\frac{1}{p_{1}^{t,k}}\partial u(x^{t,1}+x^{t,2})/\partial x_{1}-\frac{1}{p_{2}^{t,k}}\partial u(x^{t,1}+x^{t,2})/\partial x_{2}}{\frac{1}{p_{1}^{t,k}}\partial u(x^{t,1}+x^{t,2})/\partial x_{1}-\frac{1}{p_{2}^{t,k}}\partial u(x^{t,1}+x^{t,2})/\partial x_{2}+\frac{1}{p_{2}^{t,k}}\partial u(x^{t,k})/\partial x_{2}-\frac{1}{p_{1}^{t,k}}\partial u(x^{t,k})/\partial x_{1}}
\]
for $k=1,2$. It remains to verify that the expression on the right-hand
side always lies in the interval $(0,1)$. Since $\frac{1}{p_{2}^{t,k}}\partial u(x^{t,k})/\partial x_{2}\neq\frac{1}{p_{1}^{t,k}}\partial u(x^{t,k})/\partial x_{1}$,
the solution satisfies $(1-\alpha)\left(\frac{1}{p_{1}^{t,k}}\frac{\partial u(x^{t,1}+x^{t,2})}{\partial x_{1}}-\frac{1}{p_{2}^{t,k}}\frac{\partial u(x^{t,1}+x^{t,2})}{\partial x_{2}}\right)=\alpha\left(\frac{1}{p_{2}^{t,k}}\frac{\partial u(x^{t,k})}{\partial x_{2}}-\frac{1}{p_{1}^{t,k}}\frac{\partial u(x^{t,k})}{\partial x_{1}}\right)$.
Thus, if $\alpha\in(0,1)$, the denominator and numerator have the
same sign, and the denominator is strictly larger in absolute value
-- therefore the expression is well-defined and thus $\alpha$ is
identified from choices.

\hfill{}$\square$

\subsubsection*{Joint implications of narrow and broad bracketing}

To study the joint implications of broad and narrow bracketing we
consider endowment bracketing in a rich domain. We thus suppose that
we observe an infinite dataset on all decisions that each consist
of a binary part and a second degenerate part summarized by its only
available bundle $z$. Suppose that when a person has bundle $z\in\mathbb{R}_{+}^{n}$
in their second part and must make a binary choice, they apply a complete,
transitive, continuous, and strictly increasing binary relation $\succsim_{z}$
(call such a binary relation a \emph{well-behaved preference}). In
this domain, we say that a person \emph{broadly brackets} if $x\succsim_{z}y\iff x+z\succsim_{\text{\ensuremath{\underbar{0}}}}y+z$;
a person \emph{narrowly brackets} if $x\succsim_{z}y\iff x\succsim_{z^{\prime}}y$
for all $z^{\prime}\in\mathbb{R}_{+}^{n}$.
\begin{thm}
\label{thm:NB+BB} The family of well-behaved preferences $\{\succsim_{z}\}_{z\in\mathbb{R}_{+}^{n}}$
satisfies both narrow and broad bracketing if and only if there exists
a vector $u\in\mathbb{R}_{++}^{n}$ such that $x\succsim_{z}y$ if
and only if $x\cdot u\geq y\cdot u$.
\end{thm}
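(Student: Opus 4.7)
The ``if'' direction is routine: if every $\succsim_z$ is represented by $x \mapsto x \cdot u$, it does not depend on $z$ (narrow bracketing) and is preserved under translation since $x \cdot u \geq y \cdot u \iff (x+z) \cdot u \geq (y+z) \cdot u$ (broad bracketing).

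For the converse, I would first collapse the family: narrow bracketing says $\succsim_z$ is independent of $z$, so the whole family coincides with a single well-behaved preference $\succsim$. Broad bracketing then reads $x \succsim y \iff x+z \succsim y+z$ for all $x,y,z \in \mathbb{R}_+^n$. A short bootstrapping argument (decomposing a candidate vector $z \in \mathbb{R}^n$ as $z = z^+ - z^-$ and applying the nonnegative-translation invariance twice) lets me extend $\succsim$ to a well-defined, complete, transitive, continuous, strictly monotone, translation-invariant preference $\succsim^*$ on all of $\mathbb{R}^n$.

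The key step is to show that such a $\succsim^*$ is represented by a linear functional with strictly positive coefficients. I pick the reference vector $e_1$ and define $\mu(x)$ to be the unique real number with $x \sim^* \mu(x)\,e_1$; existence and uniqueness follow from continuity and strict monotonicity along the $e_1$-axis, and $\mu$ represents $\succsim^*$ with $\mu(0)=0$. Translation invariance applied to $z \sim^* \mu(z)\, e_1$ yields $a e_1 + z \sim^* (a + \mu(z))\, e_1$, so $\mu(a e_1 + z) = a + \mu(z)$. Using $z + z' \sim^* \mu(z)\, e_1 + z'$ (translation invariance again) and then applying the previous identity with $a = \mu(z)$ gives $\mu(z+z') = \mu(z) + \mu(z')$, so $\mu$ is additive on $\mathbb{R}^n$. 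A continuous additive function on $\mathbb{R}^n$ is linear (Cauchy's equation), so $\mu(x) = x \cdot u$ for some $u \in \mathbb{R}^n$; strict monotonicity applied to each $e_i$ forces $u_i > 0$.

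The main obstacle I anticipate is the extension of $\succsim$ from $\mathbb{R}_+^n$ to $\mathbb{R}^n$: the original translation invariance holds only for $z \in \mathbb{R}_+^n$, and well-definedness of $\succsim^*$ requires verifying that different nonnegative auxiliary translations yield the same ranking and that continuity, transitivity, and strict monotonicity all lift cleanly. Once $\succsim^*$ is in place on $\mathbb{R}^n$, the functional-equation argument above is a routine reduction to Cauchy's equation and yields the desired linear representation.
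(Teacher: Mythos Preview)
Your argument is correct and takes a genuinely different route from the paper's. You extend the preference to all of $\mathbb{R}^n$ and exploit full translation invariance to obtain an additive representation, then invoke Cauchy's functional equation under continuity to conclude linearity. The paper instead stays entirely within $\mathbb{R}_+^n$: it first shows that indifference under $\succsim_{\underbar{0}}$ is preserved by scalar multiplication (via an induction that alternates the narrow and broad conditions), so the diagonal utility $u(x)=\kappa$ with $x\sim_{\underbar{0}}\kappa\underbar{1}$ is homogeneous of degree one; it then restricts to the unit simplex $\Delta^n$, verifies the Independence Axiom there, and appeals to the mixture-space theorem to obtain a linear representation on $\Delta^n$, extending to $\mathbb{R}_+^n$ by homogeneity. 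Your route is arguably cleaner once the extension to $\mathbb{R}^n$ is secured, since Cauchy's equation is a one-line citation and additivity falls out of two applications of translation invariance; the paper's route avoids any mention of negative bundles and the attendant well-definedness checks you flag, but at the cost of a more intricate scaling induction and reliance on the vNM representation theorem as a black box.
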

\begin{proof}
The ``if'' direction is immediate. So now consider the ``only if''
direction.We do so by constructing an additive utility representation
for $\succsim_{\text{\ensuremath{\underbar{0}}}}$. Since $\succsim_{\text{\ensuremath{\underbar{0}}}}$
is a well-behaved preference relation, for each $x\in\mathbb{R}_{+}^{n}$
there exists a unique number $\kappa\in\mathbb{R}_{+}$ such that
$x\sim_{\text{\ensuremath{\underbar{0}}}}\kappa e$ where $e=(1,1,\dots,1)$;
define $u(x)=\kappa$ for each $x$ and the corresponding $\kappa$.
By strict monotonicity, $u$ represents $\succsim_{\text{\ensuremath{\underbar{0}}}}$,
and by continuity, $u$ is continuous. Pick any $x,y\in\mathbb{R}_{+}^{n}$.
By NB, $x\sim_{y}u(x)e$ and $y\sim_{u(x)e}u(y)e$, so by BB and transitivity
$x+y\sim_{0}y+u(x)e\sim_{0}u(x)e+u(y)e$, so $u(x+y)=u(x)+u(y)$.
Since $x,y$ were arbitrary, $u(x)=u\cdot x$ for some $u\in\mathbb{R}_{++}^{n}$
by Theorem 1 in Chapter 5 of \citet{Acz1966}.
\end{proof}

\subsubsection*{Derivations of predictions and their experimental implementations.}

First, observe that if relation $P$ is acyclic, it must be anti-symmetric.

\paragraph*{Derivation of Prediction 1 (NB-WARP)}

Suppose $x^{t^{\prime},k^{\prime}}\in B^{t,k}\subseteq B^{t^{\prime},k^{\prime}}$.
By the definition of $P$ in NB-SARP, $x^{t^{\prime},k^{\prime}}Px$
for all $x\in B^{t^{\prime},k^{\prime}}\backslash\{x^{t^{\prime},k^{\prime}}\}$.
Since $B^{t,k}\subseteq B^{t^{\prime},k^{\prime}}$, it follows that
$x^{t^{\prime},k^{\prime}}Px$ for all $x\in B^{t,k}\backslash\{x^{t^{\prime},k^{\prime}}\}$.
But if $x^{t,k}\neq x^{t^{\prime},k^{\prime}}$, the definition of
$P$ would imply $x^{t,k}Px^{t^{\prime},k^{\prime}}$, which would
violate acyclicity. Thus $x^{t,k}=x^{t^{\prime},k^{\prime}}$.

\subparagraph*{Discretized implementation in our experiment.}

In all of our tests of NB-WARP, we have $B^{t,k}=B^{t^{\prime},k^{\prime}}$
exactly. We thus test NB-WARP exactly, without needing to adjust for
discreteness.

\paragraph*{Derivation of Prediction 2 (BB-WARP)}

Suppose $x^{t^{\prime}}\in B^{t}\subseteq B^{t^{\prime}}$. By the
definition of $P$ in BB-SARP, $x^{t^{\prime},k^{\prime}}Px$ for
all $x\in B^{t^{\prime}}\backslash\{x^{t^{\prime}}\}$. Since $B^{t}\subseteq B^{t^{\prime}}$,
it follows that $x^{t^{\prime}}Px$ for all $x\in B^{t}\backslash\{x^{t^{\prime}}\}$.
But if $x^{t}\neq x^{t^{\prime}}$, the definition of $P$ would imply
$x^{t}Px^{t^{\prime}}$, which would violate acyclicity. Thus $x^{t}=x^{t^{\prime}}$.

\subparagraph*{Discretized implementation in our experiment.}

In the Risk and Social experiments, $\text{co}B^{1,1}+\text{co}B^{1,2}\subseteq\text{co}B^{2,1}$.
In particular, $B^{2,1}=\left\{ \left(\$0,\$28\right);\left(\$2,\$26\right);\dots;\left(\$28,\$0\right)\}\right\} $,
whereas $B^{1,1}+B^{1,2}=\left\{ \left(\$0,\$28\right);\left(\$1,\$27\right);\dots;\left(\$16,\$12\right);\left(\$17,\$10.80\right);\dots;\left(\$26,\$0\right)\right\} $.
Consider two cases.

If $x_{A}^{2,1}\leq\$16$, the bundle chosen in $B^{2,1}$ is exactly
affordable in $B^{1,1}+B^{1,2}$. The caveat is that the set $B^{1,1}+B^{1,2}$
has a higher resolution in this range. We do not adjust for this,
implicitly interpreting their choice from $B^{2,1}$ as revealing
their preferences over the $\text{co}B^{2,1}$. However, if this leads
a subject to fail BB-WARP, if their preferences are well-behaved,
they would choose in $B^{2,1}$ one of the closest bundles to their
preferred bundle in $\text{co}B^{2,1}$ and be within one error away
from passing BB-WARP.

However, if $x_{A}^{2,1}>\$16$, then the person reveals a sufficiently
strong preference for person/state A over B that their desired bundle
in $B^{2}$ is not affordable in $\text{co}B^{1,1}+\text{co}B^{1,2}$
and thus they trivially pass BB-WARP.

\paragraph*{Derivation of Prediction 3 (BB-Mon)}

Suppose $x_{1}^{t,2}>0$, $x_{2}^{t,1}>0$, and $p_{1}^{t,1}\leq p_{2}^{t,1}$,
$p_{1}^{t,2}>p_{2}^{t,2}$. Let $\epsilon=\min\{x_{1}^{t,2},x_{2}^{t,1}\}$.
Consider the alternative pair of choices $\left(y^{t,1},y^{t,2}\right)$
given by $y^{t,1}=\left(x_{1}^{t,1}+\frac{p_{2}^{t,1}}{p_{1}^{t,1}}\epsilon,x_{2}^{t,1}-\epsilon\right)$
and $y^{t,2}=\left(x_{1}^{t,2}-\epsilon,x_{2}^{t,2}+\frac{p_{1}^{t,2}}{p_{2}^{t,2}}\epsilon\right)$.
By construction, $\left(y^{t,1},y^{t,2}\right)$ is affordable. We
have that

$y_{1}^{t,1}+y_{1}^{t,2}=x_{1}^{t,1}+x_{1}^{t,2}+\frac{p_{2}^{t,1}-p_{1}^{t,1}}{p_{1}^{t,1}}\epsilon\geq x_{1}^{t,1}+x_{1}^{t,2}$,
and

$y_{2}^{t,1}+y_{2}^{t,2}=x_{2}^{t,1}+x_{2}^{t,2}+\frac{p_{1}^{t,2}-p_{2}^{t,2}}{p_{2}^{t,1}}\epsilon>x_{2}^{t,1}+x_{2}^{t,2}$

where the inequalities respectively follow from $p_{1}^{t,1}\leq p_{2}^{t,1}$
and $p_{1}^{t,2}>p_{2}^{t,2}$. But $x^{t}Py^{t}$ would violate monotonicity;
thus such an $x^{t,1},x^{t,2}$ pair could not pass BB-SARP.

\subparagraph*{Discretized implementation in our experiment.}

The argument in the proof applies with minimal modification to our
discretized experiment, and thus we apply the conditions directly.\pagebreak{}
\end{document}